\title{Is my attack tree correct?}
\author{Maxime Audinot\inst{1,2} \and Sophie Pinchinat\inst{1,2} 
\and Barbara Kordy\inst{1,3}}
\institute{IRISA, Rennes, France \and
University Rennes~1, Rennes, France \and
INSA Rennes, France}
\begin{document}

\maketitle

\begin{abstract}
Attack trees are a popular way to represent and evaluate potential security 
threats on  systems or infrastructures. 
The goal of this work is to provide a framework allowing to express and check  
whether an attack tree is consistent with the analyzed system. 
We model real systems using transition systems and introduce
attack trees with formally specified node labels. 
We formulate the correctness properties of an attack tree with respect to 
a system and study the complexity of the corresponding decision problems. 
The proposed framework can be used in practice to assist 
security experts in manual creation of attack trees and enhance development 
of tools for automated generation of attack trees. 
\end{abstract}

\section{Introduction}
\label{sec:intro}
An attack tree is a graphical model allowing a security expert to illustrate 
and analyze potential security threats. 
Thanks to their intuitiveness, attack trees gained a 
lot of popularity in the industrial sector~\cite{survey}, 
and organizations such as 
NATO~\cite{OTAN} and 
OWASP~\cite{OWASP} recommend their use in threat 
assessment processes. 
The root of an attack tree represents an attack objective, \ie an 
attacker's goal, and the rest of the tree decomposes this goal into sub-goals 
that the attacker may need to reach in order to perform his 
attack~\cite{Schn}. 
In this paper, we develop a formal framework to evaluate
\emph{how well an attack tree describes the attacker's goal 
with respect to 
the system that is being analyzed}. 
This work has been motivated by the 
two following practical problems.

First, in the industrial context, attack trees 
are usually created manually by security experts who may not have an 
exhaustive knowledge about all the facets (technical, social, physical) of 
the analyzed system. This process is often supported by 
the use of libraries containing generic models for 
standard security threats. Although using libraries provides a good 
starting point, the resulting attack tree may not 
always be fully consistent with the system that is being analyzed.
This problem might be reinforced by the fact that the node names 
in attack trees are often very short, and may 
thus lack precision or be inaccurate and misleading. 
If the tree is incomplete or imprecise, the results of its analysis 
(\eg estimation of the attack's cost or its probability) 
might be inaccurate. If the tree contains branches that are irrelevant for 
the considered system, the time of its analysis might be longer than 
necessary. This implies that a manually created tree needs to be validated 
against a system to be analyzed before it can be used as a formal model 
on which the security of the system will be evaluated. 

Second, to limit the burden of their manual creation, 
several academic proposals for 
automated generation of attack trees have recently been 
made~\cite{vigo_generation,pinchinat2015atsyra,trespass_generation}. 
In particular, 
we are currently developing the ATSyRA tool 
for assisted generation of attack trees from system 
models~\cite{pinchinat2015atsyra}.
Our experience shows that, due to 
the complexity and scalability issues, a fully automated generation 
is impossible. Some generation steps must thus be supported by humans. 
Such a semi-automated approach gives the expert a possibility of manually 
decomposing a goal, in such a way that an automated generation 
of the subtrees can be performed. 
This work provides 
formal foundations for the next version of our tool
which will assist the expert in producing 
trees that, by design, are correct with respect to the underlying system.

\paragraph{Contribution.}
To address the problems identified above, we introduce a mathematical 
framework 
allowing us to formalize the notion of attack trees and to define as well as 
verify their practically-relevant correctness properties with respect to a 
given system. 
We model real-life systems using finite 
transition systems. 
The attack tree nodes are \emph{labeled with
formally specified goals formulated in terms of preconditions and 
postconditions} over the possible states of the transition system. 
Formalizing the labels of the attack tree nodes allows us to 
overcome the problem of imprecise or 
misleading text-based node names 
and makes formal treatment of attack trees 
possible. 
We define the notion of 
\emph{Admissibility} of an attack tree with respect to a given system 
and introduce the correctness properties for attack trees, called  
\emph{Meet}, \emph{Under-Match}, 
\emph{Over-Match}, and \emph{Match}. These properties 
express the precision with which a given goal is refined into 
sub-goals with respect to a given system. 
We then \emph{establish the complexity of
verifying the correctness properties} to apprehend the nature of 
potential algorithmic solutions to be implemented.

\paragraph{Related work.}
In order to use any modeling framework in practice, formal foundations are 
necessary. Previous research on formalization of attack trees 
focused mainly on mathematical semantics for attack tree-based 
models~\cite{mauw2005foundationsatrees,JurgensonW09,kordy2012adtrees,jhawar2015seqatree,FundInf17},
and various algorithms for their quantitative 
analysis~\cite{act,bayesian,post15}.  
However, all these formalizations rely on an 
action-based approach, where 
the attacker's goals 
represented by the labels of the attack tree nodes 
are expressed using actions that the attacker needs to perform 
to achieve his/her objective. 
In this work, we pioneer a state-based approach to attack trees, 
where the attacker's goals relate to 
the states of the modeled system. The advantage of 
such a state-based approach is that it may benefit
from verification and model checking techniques, in a natural way, 
as this has already been done in the case of attack 
graphs~\cite{SheynerHJLW02,PhillipsS98}. 
In our framework, the label of each node of an attack tree 
is formulated in terms of preconditions and postconditions over 
the states of the modeled system: intuitively speaking, 
the goal of the attacker is to start from any state in the system that 
satisfies the preconditions and reach a state where the postconditions are met. 
The idea of formalizing the labels of attack tree nodes  in terms of 
preconditions and postconditions has already been explored in~\cite{Wolter}. 
However there, the postcondition (\ie consequence) of an action is 
represented by a parent node and its children model the preconditions 
and the action itself. 

Model checking of attack trees, especially using tools such as PRISM or UPPAAL, has
already been successfully employed, in particular to
support their quantitative 
analysis, as in~\cite{Gadyatskaya2016modelcheckingquantitative,Kumar2015uppaal,aslanyan2017exactcosts}. Such
techniques provide an effective way of handling a multi-parameter
evaluation of attack scenarios, \eg identifying the resources
needed for a successful attack or checking whether 
there exists an attack whose cost is lower than a given value and whose
probability of success is greater than a certain threshold. However, 
these approaches either do not consider any particular system beforehand, 
or they rely on a model of the system that features explicit quantitative
aspects.
The link between the analyzed system and the corresponding attack tree 
is made explicit in works dealing with automated generation of 
attack trees from system 
models~\cite{trespass_generation,pinchinat2015atsyra}. 
The systems considered in~\cite{trespass_generation} capture 
locations, assets, processes, policies, and actors. The goal of the attacker 
is to reach a given location or obtain an asset, and 
the attack tree generation algorithm relies on invalidation of 
policies that forbid him to do so. 
In the case of~\cite{pinchinat2015atsyra}, 
the ATSyRA tool is used to  
effectively generate a transition system for a real-life system: 
starting from a domain-specific language describing the original system, 
ATSyRA compiles this description into a 
symbolic transition system specified in the guarded action language 
GAL~\cite{gal}. 
ATSyRA can already handle the physical layer of a system 
(locations and connections/accesses between them)
and we are currently working on extending it with the digital layer. 
Since our experience shows that generating a transition system 
from a description in a domain-specific language is possible and efficient, 
in this paper we suppose that the transition system for 
a real system has been previously created and is available.

Finally, to the best of our knowledge, the 
problem of defining and verifying the correctness 
of an attack tree with respect to the analyzed system has only been considered 
in~\cite{audinot2016soundnessatree} which has been the starting point for the 
work presented in this paper.

\section{Motivating example}
\label{sec:cs}
Before presenting our framework, we first introduce 
a motivating example on which we will illustrate the notions and concepts 
employed in this paper.

The system modeled in our running example is a building containing a safe 
holding a confidential document. The goal of the attacker is to reach 
the safe without being detected. 
We purposely keep this example small and intuitive to ease the 
understanding of the proposed framework.
The floor plan of the building is depicted in 
Fig.~\ref{subfig:csbuilda}. It contains two rooms, denoted by 
Room1 and Room2, 
two doors -- Door1 allowing to move from outside of the building to Room1
and Door2 connecting Room1 and Room2 -- as well as one window in Room2. 
Both doors are initially locked and it is left unspecified whether the window 
is open or not. Such unspecified information expresses that the analyst
cannot predict whether the window will be open or closed in the case of 
a potential attack or that he has a limited knowledge about the system. In 
both cases, this lack of information needs to 
be taken into account during the analysis process. The two doors can be 
unlocked by means of Key1 and Key2, respectively. 
We assume that a camera that monitors Door2
is located in Room1. The camera is initially on 
but it can be switched off manually. The safe is in Room2.

\begin{wrapfigure}{r}{0.55\textwidth}
  \centering
  \subfloat[Floor plan]{
    \label{subfig:csbuilda} 
        \def\svgwidth{80pt}
    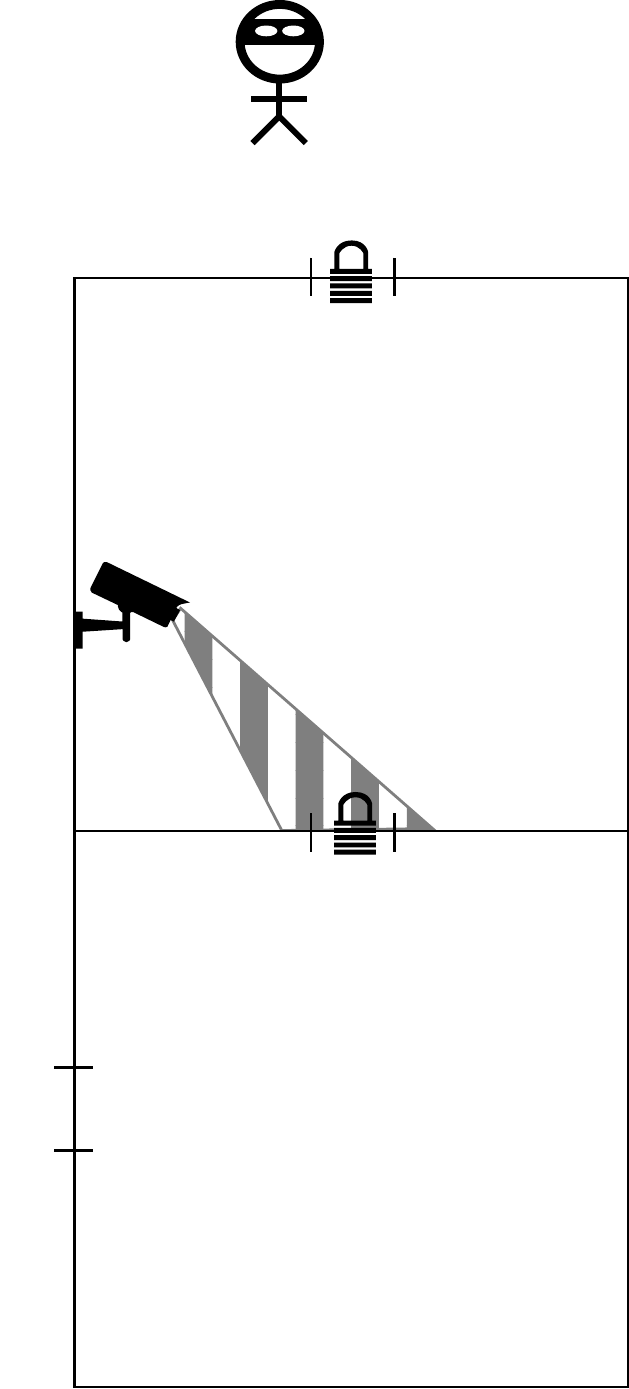
  }
  \subfloat[Attack scenarios]{
    \label{subfig:csbuildb}
        \def\svgwidth{85pt}
    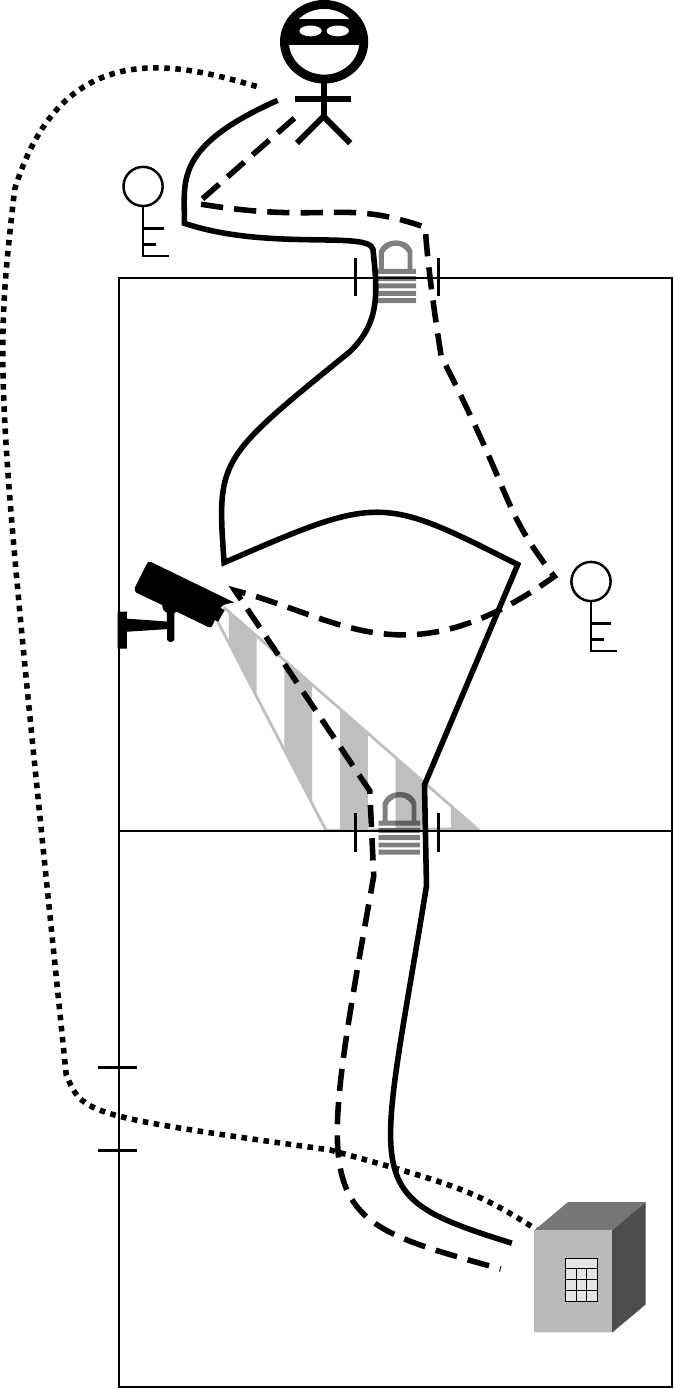
  }
  \caption{Running example building}
  \label{fig:csbuild}
	  \vspace{-20pt}
  \vspace{1pt}
\end{wrapfigure}
The attacker is located outside of the building 
and his goal is to \emph{reach the safe without being detected by the camera}. 
In Fig.~\ref{subfig:csbuildb}, we have depicted three scenarios 
(that we will call paths) allowing the attacker to reach his goal. 
In the first scenario (depicted using dotted line), the attacker goes straight through the window, if 
it is open. In the remaining two scenarios, the attacker gathers the 
necessary keys and goes through the two doors, 
switching off the camera on his way. 
These two scenarios differ only in the order in which the 
concurrent actions are sequentially performed. Since collecting Key2 and 
switching off the camera are independent actions, the attacker can first 
collect Key2 and then switch the camera off (dashed line), 
or switch the camera off before collecting Key2 (solid line).

The \emph{system} in our example consists of the building 
and the attacker. It is modeled using state variables whose values
determine possible configurations of the system. 
\begin{itemize}
\item $\posi$ -- variable 
  describing the attacker's position, ranging over $\{\outside,$ 
	$\roomOne,\roomTwo\}$; 
\item $\wind$ -- Boolean variable describing whether 
the window is open (\true) or not (\false); 
\item $\lockOne$ and $\lockTwo$ -- Boolean variables 
to describe whether the respective doors are locked or not; 
\item $\keyOne$ and $\keyTwo$ -- Boolean variables to describe 
whether the attacker
  possesses the respective key; 
\item $\cameraOn$ -- Boolean variable describing if the camera is on; 
\item $\detected$ -- Boolean variable to describe if  
the camera detected the attacker, \ie whether the attacker has crossed the area monitored by the camera while it was on. 
\end{itemize}

Given a set of state variables, we express possible configurations of a 
system using propositions. \emph{Propositions} are either 
equalities of the form \texttt{state\_variable$=$value} or 
Boolean combinations of such equalities. 
Intuitively, a proposition expresses a constraint on the possible configurations.
A configuration in which all the variables 
are left unspecified is called the \emph{empty configuration}. 
We denote it by $\emptyconf$. 

In order to analyze the security of a system, security experts often use the 
model of attack trees. 
An \emph{attack tree} is a tree in which each node
represents an attacker objective, and the children of a node represent a
decomposition of this objective into sub-objectives.  
In this work, we consider attack trees with three types of nodes: 
\begin{itemize}
\item \OR nodes representing alternative choices -- to achieve 
the goal of the node, the attacker needs to achieve the goal of at least one child; 
\item  \AND nodes representing conjunctive decomposition --
to achieve 
the goal of the node, the attacker needs to achieve all of the 
goals represented by its children (the children of an \AND node are 
connected with an arc); 
\item \SAND nodes representing sequential decomposition -- 
to achieve 
the goal of the node, the attacker needs to achieve  all of the 
goals represented by its children in the given order
(the children of a \SAND node are 
connected with an arrow).   
\end{itemize}

\begin{wrapfigure}{r}{0.4\textwidth}
\vspace*{-5pt}
  \begin{center}
\includegraphics[width=0.35\textwidth]{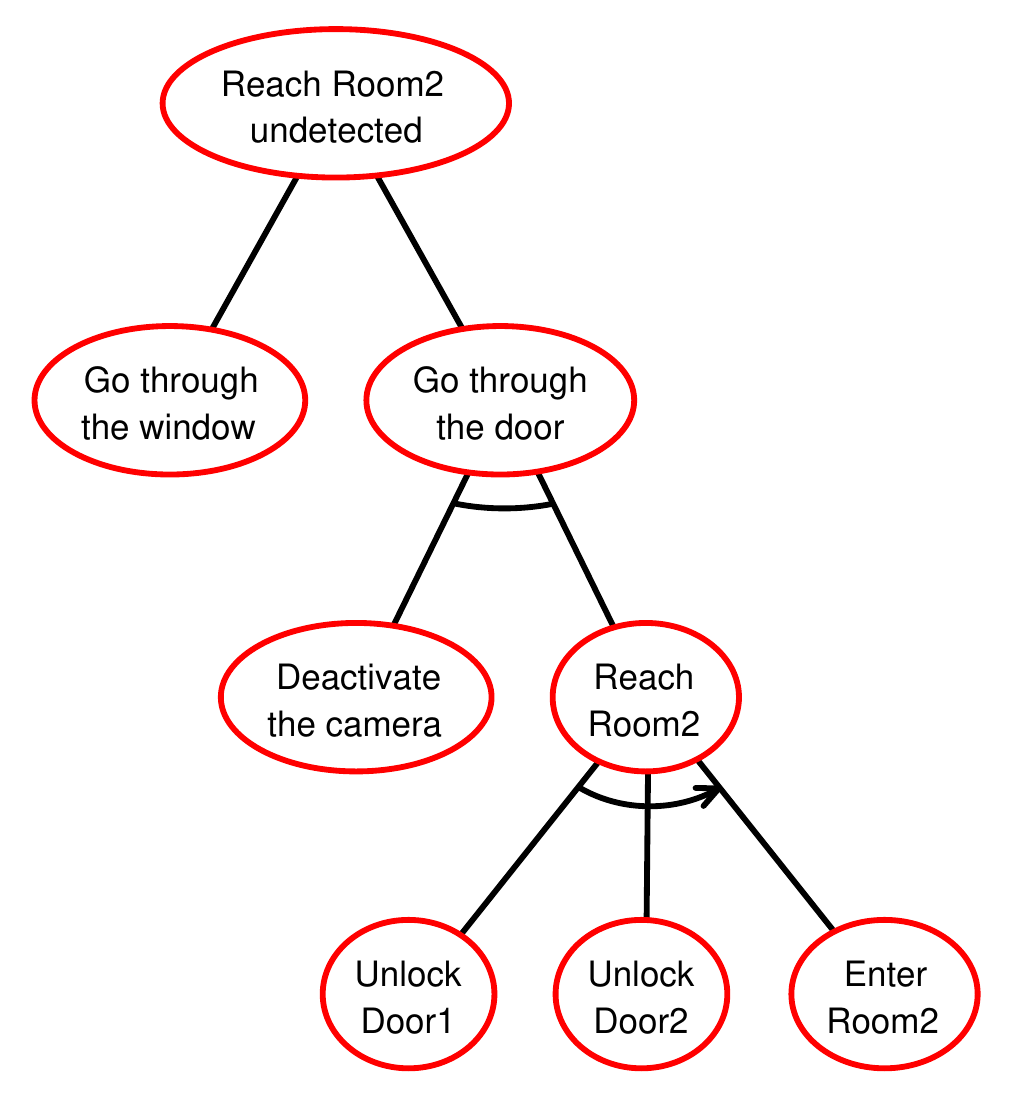}
\caption{Attack tree with informal, text-based node names}
\label{fig:textual_tree}
  \end{center}
  \vspace{-35pt}
  \vspace{1pt}
\end{wrapfigure}
The attack tree given in 
Fig.~\ref{fig:textual_tree} illustrates that in order to enter  Room2 
undetected (root node of type \OR), 
the attacker can either enter through the window or through the doors. 
In order to use the second alternative (node of type \AND), 
he needs to make sure that the camera is deactivated and 
that he reaches Room2. 
To achieve the last objective (node of type \SAND), 
he first needs to unlock 
Room1, then unlock Room2, and finally enter to Room2. 

One of the most problematic aspects of attack trees are 
the informal, text-based names of their nodes. 
These names are
often very short and thus do not express all the information that the
tree author had in mind while creating the tree.  In particular, the
textual names relate to the objective that the attacker should reach, 
however, they usually do
not capture the information about the initial situation from which he starts.

To overcome the weakness of text-based node names, we propose to 
formalize the attacker's goal using two configurations: 
the \emph{initial configuration}, usually denoted by $\atinit$, 
is the configuration before the attack starts, i.e., represents 
preconditions; and the 
\emph{final configuration}, usually denoted by $\atfinal$, 
represents postconditions, \ie the state 
to be reached to succeed in the attack. 
The goal with initial configuration $\atinit$ and final configuration
$\atfinal$ is written $\ag{\atinit}{\atfinal}$.

In our running example, the initial configuration is
$\atinit \egdef (\posi = \outside)\et (\keyOne = \false)\et (\keyTwo = \false)
  \et (\lockOne = \true)\et (\lockTwo = \true)\et (\cameraOn = \true)$.
It describes that the attacker is originally outside of the building, he does 
not have any of the keys, the two doors are locked, and the 
camera is on. 
The final configuration is $\atfinal \egdef (\posi = \roomTwo)\et (\detected =
  \false)$, \ie the attacker reached Room2 without 
being detected.

Fig.~\ref{fig:formal_atree} illustrates how such formally specified goals 
are used to label the nodes of attack trees.
The goal $\ag{\atinit}{\atfinal}$ introduced above is the label of 
 the root node of 
the tree. It is then refined into sub-goals  $\ag{\atinit_i}{\atfinal_i}$, 
where $i$ reflects the position of the node in the tree.

\noindent
\textbf{Sub-goal $\ag{\atinit_1}{\atfinal_1}$}:
The attacker, who wants to reach the safe 
in Room2 without being detected, is 
located outside of the building and the 
window is initially open. We let 
  $\atinit_1 \egdef (\posi = \outside)\et (\keyOne = \false)\et
                     (\keyTwo = \false)\et (\lockOne = \true)
  \et (\lockTwo = \true)\et (\cameraOn = \true)\et (\wind = \true)$ and $\atfinal_1 \egdef \atfinal$. 	\\
	
\noindent
\textbf{Sub-goal $\ag{\atinit_2}{\atfinal_2}$}: This sub-goal
is similar to the previous one, but the window is
originally closed. 
We let $\atinit_2 \egdef (\posi = \outside)\et (\keyOne = \false)\et (\keyTwo = \false) \et (\lockOne = \true) \et (\lockTwo = \true)\et (\cameraOn = \true) \et (\wind = \false)$ and $\atfinal_2 \egdef \atfinal$.\\

\noindent
\textbf{Sub-goal $\ag{\atinit_{21}}{\atfinal_{21}}$}: The attacker, who might be in any initial
configuration, wants to deactivate the camera. We then let $\atinit_{21}
\egdef \emptyconf$ and $\atfinal_{21} \egdef (\cameraOn =\false)$.\\

\begin{wrapfigure}{r}{0.45\textwidth}
  \vspace{-15pt}
  \begin{center}
    \includegraphics[width=0.45\textwidth]{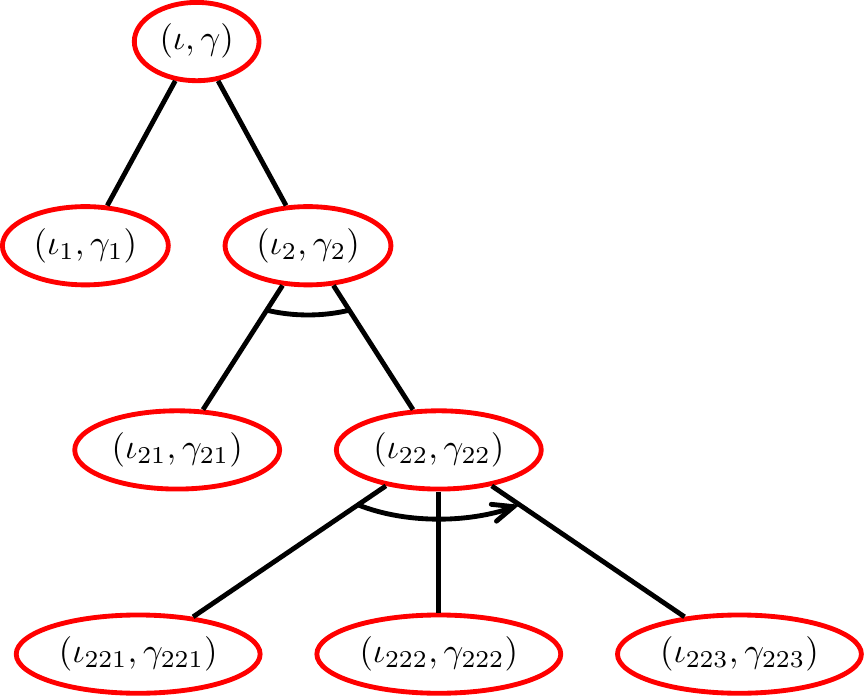}
       \caption{Attack tree with formal labels}
    \label{fig:formal_atree}
  \end{center}
\vspace{-20pt}
\end{wrapfigure}
\noindent
\textbf{Sub-goal $\ag{\atinit_{22}}{\atfinal_{22}}$}: Similar to sub-goal
$\ag{\atinit_2}{\atfinal_2}$, with the difference that we do not care
whether the camera is initially on and we no longer
require that the attacker remains undetected. We let $\atinit_{22}\egdef
(\posi = \outside) \et (\keyOne = \false)\et (\keyTwo = \false) \et
(\lockOne = \true)\et (\lockTwo = \true) \et (\wind = \false)$ and
$\atfinal_{22} \egdef (\posi = \roomTwo)$. \\

 \noindent
 \textbf{Sub-goal $\ag{\atinit_{221}}{\atfinal_{221}}$}:
The initial situation is the same as in the sub-goal
 $\ag{\atinit_{22}}{\atfinal_{22}}$, but we require that the attacker
 unlocks Door1 but not Door2: $\atinit_{221} \egdef \atinit_{22}$ and
 $\atfinal_{221} \egdef (\lockOne = \false) \et (\lockTwo = \true)$. \\

   \noindent
 \textbf{Sub-goal $\ag{\atinit_{222}}{\atfinal_{222}}$}: 
 Now, the objective is to go from a state where 
 Door1 is unlocked and Door2 is locked (like in the configuration $\atfinal_{221}$)
 to a state where both doors are unlocked. We let $\atinit_{222} \egdef \atfinal_{221}$ and $\atfinal_{222} \egdef (\lockOne = \false)\et (\lockTwo =
   \false)$. \\

 \noindent
 \textbf{Sub-goal $\ag{\atinit_{223}}{\atfinal_{223}}$}: 
 Finally, the last sub-goal is for the attacker, starting in a state 
 where both doors are unlocked, to reach  Room2. 
 We let $\atinit_{223} \egdef \atfinal_{222}$ and $\atfinal_{223} \egdef \atfinal_{22}$.

\section{Formal modeling}

We now provide formal notations and 
definitions of transition systems and attack trees 
that we have informally described in Sect.~\ref{sec:cs}.

\subsection{Transition systems}
\label{sec:TS}

We model real-life systems using finite 
transition systems. 
Transition system is a simple, yet powerful
formal tool to represent a dynamic behavior of a system 
by listing all its possible states and transitions between them. 
The finiteness of the state transition system 
is a reasonable and realistic assumption. A formal model can either be 
finite because the real-life underlying system is intrinsically finite, or it 
can have a finite representation obtained by standard abstraction techniques, 
as used in verification, static analysis, and model-checking. 

We fix the set $\AtmSet$ of propositions that 
we use to formalize possible configurations of the real system. 
In the rest of the paper, we suppose that $\AtmSet$ contains propositions
of the form $\atinit, \atfinal$, to denote preconditions ($\atinit$) and postconditions
($\atfinal$) of the  goals. 

\begin{definition}[Transition system]
  \label{def:system}
  \label{def:postpre}~\\
  A \emph{transition system} over $\AtmSet$ is a tuple
  $\system = \systemtuple$, where 
$\StateSet$ is a finite set of \emph{states} (elements of 
	$\StateSet$ are denoted 
    by $s, s_i$ for $i\in\mathbb{N}$), $\Transition \subseteq \StateSet \times \StateSet$ is the
    \emph{transition relation} of the system (which is assumed
    left-total), and $\valuation : \AtmSet \rightarrow 2^{\StateSet}$ is the
    \emph{labeling} function. We say that a state $s$ \emph{is labeled by} $p$
    when $s \in \valuation(p)$.   The \emph{size} of $\system$ is
  $\size{\system} = \size{\StateSet} + \size{\Transition}$.
  \end{definition}
For the rest of this paper, we assume that we are given a transition system $\system$ over $\AtmSet$.
A \emph{path} in $\system$ is a non-empty sequence of states. We use
typical elements $\pathvar, \pathvar',\pathvar_1, \ldots,$ $ \rho, \dots$ to
denote paths. The \emph{size} of a path $\pathvar$, denoted by
$\size{\pathvar}$, is its number of transitions, and $\pathvar(i)$ is 
the element at position $i$ in $\pathvar$, for $0 \leq i \leq \size{\pathvar}$. An empty
path\footnote{Since a path is a non-empty sequence of states, the
  empty path contains exactly one state.} is a path of size $0$.  We
write $\SPathSet$ for the set of all paths in $\system$. For
$\atinit,\atfinal \in \AtmSet$, we shortly say that a path $\pathvar$ ``goes from $\atinit$
to $\atfinal$'' whenever $\pathvar(0) \in \valuation(\atinit)$ and
$\pathvar(\size{\pathvar}) \in \valuation(\atfinal)$. 
The set of \emph{direct successors} of a set of states
  $\StateSet' \subseteq \StateSet$ is
  $\post(\StateSet') = \Set{s \in \StateSet \suchthat \exists 
	s' \in \StateSet' \text{ such that } (s',s) \in
    \Transition}$. The set of \emph{successors} of a set of states
$\StateSet' \subseteq \StateSet$ is
$\reach(\StateSet') = \Set{s \in \StateSet \suchthat \exists \pathvar \text{ with } \pathvar(0) \in \StateSet' \text{ and }
  \pathvar(\size{\pathvar}) = s}$, and the set of \emph{predecessors} of
$\StateSet' \subseteq \StateSet$ is
$\coreach(\StateSet') = \Set{s \in \StateSet \suchthat \exists \pathvar \text{ with } \pathvar(0) = s \text{ and }
  \pathvar(\size{\pathvar}) \in \StateSet'}$.

A factor of a path $\pathvar$ is a subsequence composed 
of consecutive elements of $\pathvar$. 
Formally, a \emph{factor} of a path $\pathvar$ is a path $\pathvar'$, 
such that there exists $0 \leq k \leq \size{\pathvar} - \size{\pathvar'}$, where
$\pathvar(i+k) = \pathvar'(i)$, for $0 \leq i \leq \size{\pathvar'}$. 
An \emph{anchoring} of $\pathvar'$ in $\pathvar$ is an interval
$\interval{k}{l} \subseteq \interval{0}{\size{\pathvar}}$ where
for all $i \in \interval{k}{l}$, $\pathvar'(i-k) = \pathvar(i)$ and 
$l - k = \size{\pathvar'}$. Notice that we may have $\size{\pathvar'}=0$.
We denote by $\pathvar\interval{k}{l}$ 
the factor of $\pathvar$ of anchoring
$\interval{k}{l}$.
In other words, the anchorings of $\pathvar'$ in $\pathvar$ 
are the intervals $\interval{k}{l}$ of positions in $\pathvar$ such that $\pathvar\interval{k}{l}=\pathvar'$.

We now introduce concatenation and parallel 
decomposition of paths -- two notions that will serve us to define the semantics of  
sequential and conjunctive refinements in attack trees, respectively. 
\begin{definition}[Concatenation of paths]
\label{sef:concatenation}
Let $\pathvar_1,\pathvar_2,\ldots,\pathvar_n \in \SPathSet$ be paths, such that
$\pathvar_i(\size{\pathvar_i}) = \pathvar_{i+1}(0)$ for $1 \leq i \leq n-1$.  The \emph{concatenation} of
$\pathvar_1, \pathvar_2,\ldots,\pathvar_n $, denoted by $\pathvar_1 . \pathvar_2. \ldots .\pathvar_n$, is the path
$\pathvar$, where $\pathvar\interval{\sum_{k=1}^{i-1}\size{\pathvar_k}}{\sum_{k=1}^{i-1}\size{\pathvar_k} + \size{\pathvar_{i}}} = \pathvar_{i}$\footnote{We use the convention that $\sum_{k=1}^0\size{\pathvar_k}=0$.}.
We generalize the concatenation to sets of paths by letting
$\PathSet . \PathSet' = \Set{\pathvar \in \SPathSet \suchthat
  \exists i, 0 \leq i\leq \size{\pathvar} \text{ and }
  \pathvar\interval{0}{i} \in \PathSet \text{ and }
  \pathvar\interval{i}{\size{\pathvar}} \in \PathSet'}$.
\end{definition}

\begin{definition}[Parallel decomposition of paths]
  \label{def:decomp}
 A set $\Set{\pathvar_1,\dots,\pathvar_n} \subseteq \SPathSet$ is a
 \emph{parallel decomposition} of $\pathvar \in \SPathSet$ if for every
 $1\leq i \leq n$ the path $\pathvar_i$ is a factor of $\pathvar$ for some 
 anchoring $\interval{k_i}{l_i}$, such that every interval
 $\interval{j}{j+1} \subseteq \interval{0}{\size{\pathvar}}$ is contained
 in $\interval{k_i}{l_i}$ for some $i\in\{1,\dots, n\}$ 
(which trivially holds if $\size{\pathvar}=0$). We then say that the sequence $\pathvar_1,\dots,\pathvar_n$ is a
 \emph{parallel decomposition} of $\pathvar$ for the anchorings $\interval{k_1}{l_1},\ldots,\interval{k_n}{l_n}$.
  \end{definition}

\begin{lemma}
  \label{lem:decomp} 
  Given a path $\pathvar \in \SPathSet$, and a sequence $k_1,l_1,\ldots,k_n,l_n \in
  \interval{0}{\size{\pathvar}}$, deciding whether 
  $\pathvar\interval{k_1}{l_1},\ldots,\pathvar\interval{k_n}{l_n}$ is
  a parallel decomposition of $\pathvar$ for the anchorings $\interval{k_1}{l_1},\ldots,$ $\interval{k_n}{l_n}$ can be done in time $\OO{n\size{\pathvar}}$.
\end{lemma}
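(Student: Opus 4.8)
The plan is to isolate the single non-trivial part of Definition~\ref{def:decomp} and reduce the whole question to a linear interval-covering check. First I would dispose of the cheap conditions. By hypothesis every $k_i$ and $l_i$ lies in $\interval{0}{\size{\pathvar}}$; in $\OO{n}$ additional time I check that $k_i \le l_i$ for every $i$ (otherwise $\pathvar\interval{k_i}{l_i}$ is not a well-formed factor and the instance is rejected). Once this holds, each $\pathvar\interval{k_i}{l_i}$ is, by construction, a factor of $\pathvar$ whose anchoring is exactly $\interval{k_i}{l_i}$ and satisfies $l_i - k_i = \size{\pathvar\interval{k_i}{l_i}}$, so the first requirement of Definition~\ref{def:decomp} is automatically met for the fixed anchorings $\interval{k_1}{l_1},\ldots,\interval{k_n}{l_n}$. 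Hence the only thing left to decide is the covering condition: every unit interval $\interval{j}{j+1} \subseteq \interval{0}{\size{\pathvar}}$ is contained in some $\interval{k_i}{l_i}$. When $\size{\pathvar}=0$ there are no such unit intervals and the answer is ``yes'', so from now on assume $\size{\pathvar}\ge 1$.

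The key observation is that $\interval{j}{j+1} \subseteq \interval{k_i}{l_i}$ holds precisely when $k_i \le j$ and $j+1 \le l_i$, i.e.\ when $k_i \le j \le l_i - 1$; in particular a degenerate anchoring with $k_i = l_i$ covers no unit interval. So I would maintain a Boolean array $b[0],\dots,b[\size{\pathvar}-1]$, initialised to \textsf{false}; then, for $i$ from $1$ to $n$, set $b[j] \leftarrow \textsf{true}$ for every $j$ with $k_i \le j \le l_i - 1$; and finally scan the array, answering ``yes'' iff all its entries are \textsf{true}. Correctness is immediate from the observation above, since $b[j]$ ends up \textsf{true} exactly when the unit interval $\interval{j}{j+1}$ is covered by at least one of the given anchorings.

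For the complexity, the initialisation and the final scan cost $\OO{\size{\pathvar}}$, the preliminary validity check costs $\OO{n}$, and processing anchoring $i$ costs $\OO{l_i - k_i + 1} = \OO{\size{\pathvar}}$ because $\interval{k_i}{l_i} \subseteq \interval{0}{\size{\pathvar}}$; summing over the $n$ anchorings gives $\OO{n\size{\pathvar}}$, which dominates. In all honesty there is no real obstacle here: the statement is essentially a bookkeeping remark, and the only points that deserve a moment's care are the edge cases $\size{\pathvar}=0$ and $k_i = l_i$, and the fact that ``being a factor with a prescribed anchoring'' imposes no constraint beyond $k_i \le l_i \le \size{\pathvar}$. (A single sweep over the sorted endpoints, or a difference array, would even yield $\OO{n+\size{\pathvar}}$, but the weaker bound claimed in the statement already follows from the naive marking procedure above.)
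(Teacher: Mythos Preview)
Your proposal is correct and follows essentially the same approach as the paper: reduce the question to the covering condition of Definition~\ref{def:decomp} (every unit interval $\interval{j}{j+1}$ is contained in some $\interval{k_i}{l_i}$) and observe that this can be checked naively in time $\OO{n\size{\pathvar}}$. The paper's proof is terser---it simply says ``a naive approach'' without spelling out the Boolean-array procedure or the edge cases you treat---but the argument is the same.
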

\begin{proof}
    Verifying that $\pathvar\interval{k_1}{l_1},\ldots,\pathvar\interval{k_n}{l_n}$ is
  a parallel decomposition of $\pathvar$ for the anchorings $\interval{k_1}{l_1},\ldots,\interval{k_n}{l_n}$ amounts to checking that for 
   every interval $\interval{j}{j+1} \subseteq \interval{0}{\size{\pathvar}}$, there is an $i\in\Setleqn{n}$ such that $\interval{j}{j+1}
   \subseteq \interval{k_i}{l_i}$. This can clearly be done in time $\OO{n\size{\pathvar}}$ by a naive approach. \qed
\end{proof}

An example of a parallel decomposition is illustrated in Fig.~\ref{fig:assos}, 
where $\pi_1=\pi\interval{0}{2}$, $\pi_2=\pi\interval{3}{5}$, and 
$\pi_3=\pi\interval{1}{4}$. 
\begin{figure}[ht]
\centering{
\begin{tikzpicture}[scale=1]
\coordinate[thin] (1) at (0,2)  ; 
\coordinate[thin] (2) at (2,2); 
\coordinate[thin] (3) at (2.5,2) ; 
\coordinate[thin] (4) at (4,2);
\coordinate[thin] (5) at (4.5,2) ; 
\coordinate[thin] (6) at (6,2);
 \draw (1)node[below] {$s_0$} node[above] {$\atinit$} node {\textbullet}--(2)
node[below] {$s_1$} node {\textbullet}--(3)
node[below] {$s_2$} node {\textbullet}-- node[above]{$\pi$} (4)
node[below] {$s_3$} node {\textbullet}--(5)
node[below] {$s_4$} node {\textbullet}--(6) 
node[below] {$s_5$} node[above] {$\atfinal$} node {\textbullet};
\coordinate[thin] (11) at (0,1) ; 
\coordinate[thin] (13) at (2.5,1) ; 
\draw (11)node[below] {$s_0$} 
node[above] {$\atinit_1$} node {\textbullet}-- node[above]{$\pi_1$} (13)
node[below] {$s_2$} node[above] {$\atfinal_1$} node {\textbullet};
\coordinate[thin] (41) at (4,1) ; 
\coordinate[thin] (46) at (6,1) ; 
\draw (41)node[below] {$s_3$} 
node[above] {$\atinit_2$}
node {\textbullet}-- node[above]{$\pi_2$}(46)
node[below] {$s_5$} node[above] {$\atfinal_2$} 
node {\textbullet};
\coordinate[thin] (23) at (2,0) ; 
\coordinate[thin] (53) at (4.5,0) ; 
\draw (23)node[below] {$s_1$} 
node[above] {$\atinit_3$}
node {\textbullet}-- node[above]{$\pi_3$} (53)
node[below] {$s_4$} 
node[above] {$\atfinal_3$}
node {\textbullet};
\end{tikzpicture}
}
\caption{Parallel decomposition of $\pi$ into $\{\pi_1, \pi_2, \pi_3\}$.}
\label{fig:assos}
\end{figure}
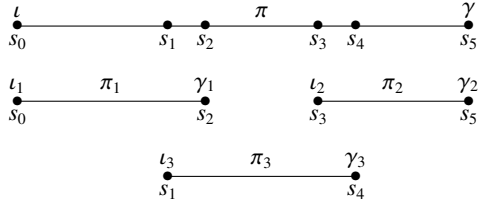

A \emph{cycle} in a path $\pathvar\in\SPathSet$ is a factor $\pathvar'$ of
$\pathvar$ such that $\fst{\pathvar'} = \lst{\pathvar'}$. An \emph{elementary
  path} is a path with no cycle.  Remark that an elementary path
$\pathvar$ does not contain any state more than once, so $\size{\pathvar}
\leq \size{\system}$.  Removing  a cycle $\pathvar'$
of anchoring $\interval{k}{l}$ from a path $\pathvar$ yields the path
$\pathvar[0,k].\pathvar[l,\size{\pathvar}]$.  Removing all the cycles from
$\pathvar$ consists in iteratively removing cycles until the
resulting path is elementary. Note that the resulting path may depend
on the order in which the cycles are removed.

We illustrate the notions defined in this 
section on our running example. 
\begin{example}
\label{ex:system}
We use the state variables introduced in Sect.~\ref{sec:cs} to 
describe the states of a part of our building system. 
By $\etat_0$ we denote the state where 
$\posi = \outside$ (the attacker is outside); $\wind = \false$  (the window is closed); $\lockOne= \lockTwo = \true$ (both doors are locked); $\keyOne = \keyTwo = \false$ (the attacker does not have any key); $\cameraOn = \true$ (the camera is on); $\detected = \false$ (the attacker has not been detected).
Furthermore, we consider seven additional states $\etat_i$, such that, 
for every $1\leq i\leq 7$, the specification of $\etat_i$ is the same 
as the specification of 
$\etat_{i-1}$, except one variable: state $\etat_1$ is as $\etat_0$ but $\keyOne=\true$  (the attacker has Key1); state $\etat_2$ is as $\etat_1$ but $\lockOne = \false$ (Door1 is unlocked); state $\etat_3$ is as $\etat_2$ but $\posi=\roomOne$ (the attacker is in Room1); $\etat_4$ is as $\etat_3$ but $\cameraOn = \false$ (the camera is off); $\etat_5$ is as $\etat_4$ but $\keyTwo=\true$  (the attacker has Key2); state $\etat_6$ is  as $\etat_5$ but $\lockTwo = \false$ (Door2 is unlocked); state $\etat_7$ is as $\etat_6$ but $\posi=\roomTwo$ (the attacker is in Room2).

To model the dynamic behavior of the system, we set 
$(\etat_{i-1},\etat_{i})\in \Transition$,  for all $1\leq i\leq 7$. 
Given 
$p=(\posi = \outside)\et (\lockOne=\true)$ and 
$p'= (\posi = \roomOne) \ou (\posi = \roomTwo)$, 
we have 
$\etat_0,\etat_1\in \valuation(p)$ and 
$\etat_i\in \valuation(p')$, for $3\leq i\leq 7$. 

The path 
$\rho= \etat_0\etat_1\etat_2\etat_3\etat_4\etat_5\etat_6\etat_7$, corresponds to the 
scenario depicted using solid line in Fig.~\ref{subfig:csbuildb}.
The set $\{\etat_0\etat_1\etat_2\etat_3\etat_4,\, 
\etat_3\etat_4\etat_5\etat_6\etat_7\}$ is an example of parallel decomposition of $\rho$. 
One can notice that $\rho$ is an elementary path. 
 To show that while being in Room1 the attacker can turn off 
 but also turn on the 
 camera, we could add the transition $(\etat_4,\etat_3)$ to $\Transition$. 
 In this case, the attacker could also take the path 
 $\rho'= \etat_0\etat_1\etat_2\etat_3\etat_4\etat_3\etat_4\etat_5\etat_6\etat_7$
  which is not elementary because it contains 
 the cycle $\etat_3\etat_4\etat_3$.

\end{example}

\subsection{Attack trees}
\label{sec:at}

To evaluate the security of systems, we use attack trees.  An attack tree does not replace 
the state-transition system model -- it complements it with additional 
information on how the corresponding real-life system could be attacked. 
There 
exist a plethora of methods 
and algorithms for quantitative and qualitative reasoning about security 
using attack trees~\cite{survey}.
However, accurate results can only be obtained if the attack tree 
is in some sense consistent with the analyzed system. Our goal is thus to validate the relevance of an attack tree with respect to a 
given system. To make this validation possible, we need a model capturing 
more information than 
just text-based names of the nodes. In this section, we therefore 
introduce a formal definition of attack trees, 
where the difference with the classical definition is the
presence of a goal of the form $\ag{\atinit}{\atfinal}$ at each node.

\begin{definition}[Attack tree]
  \label{def:aatree}
  An \emph{attack tree} $\atree$ over the set of propositions 
  $\AtmSet$ is either a leaf $\ag{\atinit}{\atfinal}$, where
  $\atinit,\atfinal \in \AtmSet$, or a composed tree of the form
  $(\ag{\atinit}{\atfinal},\OP) (\atree_1, \atree_2, \dots, \atree_n)$,
  where $\atinit,\atfinal \in \AtmSet$, $\OP \in \Set{\OR, \AND, \SAND}$ has arity $n \geq 2$, and $\atree_1$, $\atree_2$, \dots, $\atree_n$ are attack trees.
  The \emph{main goal} of an attack tree
  $\atree=(\ag{\atinit}{\atfinal},\OP) (\atree_1, \atree_2, \dots,
  \atree_n)$ is $\ag{\atinit}{\atfinal}$ and its \emph{operator} is
  $\OP$.

The size of an attack tree $\size{\atree}$ is the number of the nodes in
$\atree$.
Formally, 
$\size{\ag{\atinit}{\atfinal}} = 1$ and
$\size{(\ag{\atinit}{\atfinal},\OP) (\atree_1, \atree_2, \dots
  ,\atree_n)} = 1 + \Sigma_{i=1}^n\size{\atree_i}$.
\end{definition}

As an example, the tree in Fig.~\ref{fig:formal_atree} is 
$\atree =
(\ag{\atinit}{\atfinal},\OR) (\atree_1, \atree_2)$. 
The subtree $\atree_1 =
\ag{\atinit_1}{\atfinal_1}$ is a leaf and $\atree_2=
(\ag{\atinit_2}{\atfinal_2},\AND) (\ag{\atinit_{21}}{\atfinal_{21}},
\atree_{22})$ is a composed tree with \linebreak
$\atree_{22} =
(\ag{\atinit_{22}}{\atfinal_{22}},\SAND)
(\ag{\atinit_{221}}{\atfinal_{221}},
\ag{\atinit_{222}}{\atfinal_{222}},
\ag{\atinit_{223}}{\atfinal_{223}})$.\\

Before introducing properties that address correctness of an attack tree,
we need to define the \emph{path semantics} of goal expressions that
arise from tree descriptions. A \emph{goal expression} is either a mere
atomic goal of the form $\ag{\atinit}{\atfinal}$ or a composed goal of
the form $\OP(\ag{\atinit_1}{\atfinal_1},\ag{\atinit_2}{\atfinal_2},\dots,\ag{\atinit_n}{\atfinal_n})$, where $\OP \in
\Set{\OR,\SAND,\AND}$. The path semantics of a goal expression is defined as follows.
\begin{itemize}
  \item
    $\agsem{\ag{\atinit}{\atfinal}} = \Set{\pathvar \in \SPathSet
      \suchthat \pathvar \text{ goes from } \atinit \text{ to } \atfinal}$
  \item
    $\agsem{\OR(\ag{\atinit_1}{\atfinal_1},\ag{\atinit_2}{\atfinal_2},\dots,\ag{\atinit_n}{\atfinal_n})} = \agsem{\ag{\atinit_1}{\atfinal_1}} \cup
    \agsem{\ag{\atinit_2}{\atfinal_2}} \cup \ldots \cup \agsem{\ag{\atinit_n}{\atfinal_n}}$
  \item
    $\agsem{\SAND(\ag{\atinit_1}{\atfinal_1},\ag{\atinit_2}{\atfinal_2},\dots,\ag{\atinit_n}{\atfinal_n})} = \agsem{\ag{\atinit_1}{\atfinal_1}}
    . \agsem{\ag{\atinit_2}{\atfinal_2}}. \ldots .\agsem{\ag{\atinit_n}{\atfinal_n}}$
  \item
    $\agsem{\AND(\ag{\atinit_1}{\atfinal_1},\ag{\atinit_2}{\atfinal_2},\dots,\ag{\atinit_n}{\atfinal_n})} = \Set{ \pathvar \in
      \SPathSet \suchthat \forall i\in \{1,\dots, n\}\ \exists
       \pathvar_i \in \agsem{\ag{\atinit_i}{\atfinal_i}} $, s.t. 
      $ \Set{\pathvar_1,\pathvar_2,\dots ,\pathvar_n} $ is a parallel
      decomposition of $\pathvar}$.
  \end{itemize}

Consider the goal $\ag{\atinit}{\atfinal}$ of our running example, and
let $\runningsystem$  be the system introduced in Example~\ref{ex:system}.
We have
$\agsem{\ag{\atinit}{\atfinal}} = \{\etat_0\etat_1\etat_2(\etat_3\etat_4)^k
\etat_5\etat_6\etat_7 \mid 
k\geq 1\}$, where $(\etat_3\etat_4)^k$ is the path 
composed of $k$ executions of $\etat_3\etat_4$.

\section{Correctness properties of attack trees}
\label{sec:aat}
We now define four correctness properties 
for attack trees,  
illustrate them on our running example, 
and discuss their relevance for 
real-life security analysis.

\subsection{Definitions}

Before formalizing the correctness properties for attack trees, 
we wish to discard attack trees with ``useless''
nodes.  To achieve this, we define the \emph{admissibility} of an
attack tree $\atree$ w.r.t.\ the system $\system$.\\

The property that an attack tree $\atree$ is \emph{admissible} w.r.t.\ a system $\system$ is inductively defined
as follows. A leaf tree $\ag{\atinit}{\atfinal}$ is \emph{admissible} whenever
$\agsem{\ag{\atinit}{\atfinal}} \neq \emptyset$. A composed tree $(\ag{\atinit}{\atfinal},\OP)
(\atree_1, \ldots, \atree_n)$ is \emph{admissible}
whenever three conditions hold: (a)
$\agsem{\ag{\atinit}{\atfinal}} \neq \emptyset$, (b)
$\agsem{\OP(\ag{\atinit_1}{\atfinal_1},\ldots, \ag{\atinit_n}{\atfinal_n})} \neq \emptyset$,
where $\ag{\atinit_i}{\atfinal_i}$ is the main goal of $\atree_i$ ($1\leq i\leq n$), and (c) every subtree $\atree_i$ is admissible.\\

We now propose four notions of correctness, that 
provide various formal meanings to the local refinement of a 
goal in an admissible tree.
\begin{definition}[Correctness properties]
  \label{def:correctness}~\\ Let $\atree$ be a composed admissible
attack tree of the form $(\ag{\atinit}{\atfinal},\OP)
(\atree_1, \atree_2 \dots, \atree_n)$, and assume
$\ag{\atinit_i}{\atfinal_i}$ is the main goal of $\atree_i$, for
$i\in\{1,\dots,n\}$.  The tree $\atree$ has the
\begin{enumerate}
\item \emph{Meet property} if $\agsem{\OP(\ag{\atinit_1}{\atfinal_1},\dots, \ag{\atinit_n}{\atfinal_n})} \inter \agsem{\ag{\atinit}{\atfinal}} \neq \emptyset$.

\item \emph{Under-Match property} if $\agsem{\OP(\ag{\atinit_1}{\atfinal_1},\dots, \ag{\atinit_n}{\atfinal_n})} \subseteq \agsem{\ag{\atinit}{\atfinal}}$.
  
\item \emph{Over-Match property} if $\agsem{\OP(\ag{\atinit_1}{\atfinal_1},\dots, \ag{\atinit_n}{\atfinal_n})} \supseteq \agsem{\ag{\atinit}{\atfinal}}$.

\item \emph{Match property} if $\agsem{\OP(\ag{\atinit_1}{\atfinal_1},\dots, \ag{\atinit_n}{\atfinal_n})} = \agsem{\ag{\atinit}{\atfinal}}$.
\end{enumerate}
\end{definition}

Clearly the Match property implies all other properties, whereas
Under- and Over-Match properties are incomparable -- as illustrated
in Sect.~\ref{sec:matching_example} -- and they both imply the Meet
property. Note that a tree $\atree$ has the Match property if, and only if, it
has both the Under-Match property and the Over-Match property.

The correctness properties of Definition~\ref{def:correctness} are \emph{local} (at the root of the subtree), but they can easily be made 
\emph{global} by propagating their requirement to all of the subtrees. As there are $\size{T}$ many
subtrees, the complexity of globally deciding these properties has the same order of magnitude as in the local case. 

\subsection{Illustration on the running example}
\label{sec:matching_example}
In the system $\runningsystem$ 
defined in Example~\ref{ex:system} and 
composed of the states $\etat_0,\dots
,\etat_7$, we add two states. First, the state $\etat'_0$ that is
similar to $\etat_0$ except that we assume that the window is
open, \ie $\wind = \true$, and second, the state $\etat'_7$ that is
similar to $\etat_0'$ except that we assume that the attacker is in
Room2, \ie $\posi=\roomTwo$.
As a consequence the transitions of the system $\runningsystem$ become
$\etat'_0\rightarrow \etat_0\rightarrow \etat_1\rightarrow \etat_2\rightarrow 
\etat_3\leftrightarrow \etat_4\rightarrow \etat_5\rightarrow
\etat_6\rightarrow \etat_7$ and $\etat'_0\rightarrow \etat'_7$, where the latter models that 
if the window is open, the attacker can reach Room2 undetected by
entering through the window.

Let us consider the attack tree 
$\atree(\ag{\atinit}{\atfinal}, \OR)(\agelem[1],\atree_2)$ from 
Fig.~\ref{fig:formal_atree}, where the main goal
of $\atree_2$ is $\ag{\atinit_{2}}{\atfinal_{2}}$. Since in system
$\runningsystem$, the set of paths $\agsem{\ag{\atinit}{\atfinal}}$ is
exactly the union of $\agsem{\ag{\atinit_1}{\atfinal_1}}$ and
$\agsem{\ag{\atinit_2}{\atfinal_2}}$, the tree $\atree$ has the Match
property w.r.t.\ $\runningsystem$. This means that in order to achieve
goal $\ag{\atinit}{\atfinal}$, it is necessary and sufficient to
achieve goal $\ag{\atinit_1}{\atfinal_1}$ or goal
$\ag{\atinit_2}{\atfinal_2}$ .

We now consider the sub-tree $\atree_2$ of $\atree$ rooted at the
node labeled by $\ag{\atinit_{2}}{\atfinal_{2}}$ in
Fig.~\ref{fig:formal_atree}. The tree $\atree_2$ is of the form
$(\ag{\atinit_{2}}{\atfinal_{2}},\AND)({\ag{\atinit_{21}}{\atfinal_{21}}},
\atree_2')$ where the main goal of $\atree_2'$ is $\ag{\atinit_{22}}{\atfinal_{22}}$. Our objective is to analyze 
the relationship between the main goal $\ag{\atinit_{2}}{\atfinal_{2}}$ of $\atree_2$
and the composed goal
$\AND(\ag{\atinit_{21}}{\atfinal_{21}},\ag{\atinit_{22}}{\atfinal_{22}})$.
In other words, we ask how does the aim of reaching Room2 undetected via
building relates with turning off the camera
($\ag{\atinit_{21}}{\atfinal_{21}}$) and reaching Room2
($\ag{\atinit_{22}}{\atfinal_{22}}$). A quick analysis of system
$\runningsystem$ shows that indeed achieving both subgoals
$\ag{\atinit_{21}}{\atfinal_{21}}$ and
$\ag{\atinit_{22}}{\atfinal_{22}}$ is necessary to achieve goal
$\ag{\atinit_{2}}{\atfinal_{2}}$, but actually it is not
sufficient. Consider the path
$\delta=\etat'_0\etat_0\etat_1\etat_2\etat_3\etat_4\etat_5\etat_6\etat_7$. This
path achieves goal
$\AND({\ag{\atinit_{21}}{\atfinal_{21}}},{\ag{\atinit_{22}}{\atfinal_{22}}})$,
as it can be decomposed into 
$\delta_{21}=\etat'_0\etat_0\etat_1\etat_2\etat_3\etat_4$ and 
$\delta_{22}=\etat_0\etat_1\etat_2\etat_3\etat_4\etat_5\etat_6\etat_7$,
achieving $\ag{\atinit_{21}}{\atfinal_{21}}$ and
$\ag{\atinit_{22}}{\atfinal_{22}}$, respectively.  However,
$\delta\notin \agsem{\ag{\atinit_{2}}{\atfinal_{2}}}$, since
$\etat'_0\not\in \valuation(\atinit_{2})$ (recall that $\atinit_2$
requires the window to be closed which is not the case in $\etat'_0$).
This is what the Over-Match property 
reflects. As a consequence, the main tree $\atree$ does not have
the global Match property w.r.t.\ $\runningsystem$.

Symmetrically to the Over-Match property, Under-Match reflects a sufficient
but not necessary condition.
To illustrate the Under-Match property, let us extend the system 
	$\runningsystem$
	with one more variable $\roofO$ modeling that there is a window in the 
roof 
	of Room2, which is either open ($\roofO=\true$) or closed ($\roofO=\false$). 
	In all previously considered states, we assume that $\roofO=\false$, and we introduce 
	two additional states 
	\begin{itemize}
	\item[$\etat_8$:] is like $\etat_0$ except that $\roofO=\true$;
	\item[$\etat_9$:] is like $\etat_0$ except that $\roofO=\true$ and $\posi=\roomTwo$.
	\end{itemize}
	The extended system, that we denote by $\runningsystem'$, contains all transitions 
already present in $\runningsystem$, as well as 
$\etat_8 \rightarrow \etat_9$ 
which models 
that it is also possible to reach Room2 
undetected by entering thorough the window located in the roof,  
if it is open. 

Let us modify slightly the attack tree from Fig.~\ref{fig:formal_atree} 
by adding to the preconditions $\atinit_1$ and $\atinit_2$ that 
the window in the roof is closed. Formally, we obtain $\atinit'_1$ 
and $\atinit'_2$ as follows
\begin{align*}
  \atinit'_1 \egdef &(\posi = \outside)\et (\keyOne = \false)\et
                     (\keyTwo = \false)\\\et &(\lockOne = \true)
                                                   \et (\lockTwo = \true)\et (\cameraOn = \true)\\ \et &(\wind = \true) \et (\roofO=\false) \\[5pt]
  \atinit'_2 \egdef &(\posi = \outside)\et (\keyOne = \false)\et
                     (\keyTwo = \false)\\\et &(\lockOne = \true)
                                                   \et (\lockTwo = \true)\et (\cameraOn = \true)\\ \et &(\wind = \false \et (\roofO=\false). 
\end{align*}
We are now interested in validating 
(with respect to the extended system $\runningsystem'$) 
the modified tree 
whose root $\ag{\atinit}{\atfinal}$ is refined into 
$\OR{\ag{\atinit'_{1}}{\atfinal_{1}}}{\ag{\atinit'_{2}}{\atfinal_{2}}}$. 
We observe that 
$\delta'=\etat_8\etat_9$ satisfies the goal $\ag{\atinit}{\atfinal}$ 
of its root, \ie entering Room2 undetected starting from outside of the building, 
but $\delta'$ does not satisfy the semantics of 
the disjunctive refinement
$\OR{\ag{\atinit'_{1}}{\atfinal_{1}}}{\ag{\atinit'_{2}}{\atfinal_{2}}}$
because $\etat_8\not\in\valuation(\atinit'_{1})$ nor $\etat_8\not\in\valuation(
\atinit'_{2})$. 
We can thus conclude that 
	$\runningsystem' \models \OR{\ag{\atinit'_{1}}{\atfinal_{1}}}{\ag{\atinit'_{2}}{\atfinal_{2}}} \Vsubseteq \ag{\atinit}{\atfinal}$ which shows that 
	the modified tree has the Under-Match property in $\runningsystem'$. 

Regarding the Meet property, we
invite the reader to consider the following discussion on the relevance of
the correctness properties we have proposed.

\subsection{Relevance of the correctness properties} 
The main objective of introducing the four correctness properties is
to be able to validate an attack tree with respect to a system
$\system$, \ie verify how faithfully the tree represents potential
threats on $\system$.  This is of special importance for the trees
that are created manually or which are borrowed from an attack tree
library.

In the perfect world, we would expect to work with attack trees having 
the (global) Match property, \ie where 
the refinement of every (sub-)goal
covers perfectly all possible ways of reaching the (sub-)goal in the system. 
However, a tree created by a human will rarely 
have this property. The experts usually do not have 
perfect knowledge about the system and might lack information about 
some relevant data.  
Trees that have been created for similar systems are often reused
but they might actually be incomplete or inaccurate 
with respect to the current system. 
Finally, requiring 
the (global) Match property might 
also be unrealistic for goals expressed only with a couple 
$\ag{\texttt{precondition}}{\texttt{postcondition}}$. 
Therefore, Match is often too strong to be the property expected by default.

In practice, experts base their trees on 
some example scenarios, which implies that they obtain trees having 
the (global) Meet property. The Meet property 
-- which ensures that there is at least one path in the system satisfying 
both the parent goal and its refinement --
is the minimum that we expect from 
an attack tree so that we can consider that 
it is (in some sense) correct and 
so that we can start reasoning about 
the security of the underlying system. 

However, in order to be able to perform a thorough and accurate
analysis of security, one needs stronger properties to hold.  One of
the purposes of attack trees is to provide a summary of possible
individual attack scenarios in order to quantify the security-relevant
parameters, such as their cost, their time or their probability. This
helps the security experts to compare and rank the different
scenarios, to be able to deduce the most probable ones and propose
suitable countermeasures. The classical bottom-up algorithm for
quantification of attack trees, described for instance
in~\cite{mauw2005foundationsatrees}, assigns the parameter values to
the leaf nodes and then propagates them up to the root, using
functions that depend on the type of the refinement used (in our
case \OR, \AND, \SAND). This means that the value of the parent node
depends solely on the values of its children.  To make such a
bottom-up quantification meaningful from the attacker's perspective,
we need to require at least the (global) Under-Match property.
Indeed, this property stipulates that all the paths satisfying a
refinement of a node's goal also satisfy the goal itself.  Under-Match
corresponds thus to an under-approximation of the set of scenarios and
it is enough to consider it for the purpose of finding a vulnerability
in the system.

To make the analysis meaningful from the point 
of view of the defender, we will rather require the Over-Match property. 
This property means that all the paths satisfying the parent goal 
also satisfy its decomposition into sub-goals. 
Since the Over-Match property corresponds to an 
over-approximation of the set of scenarios, it is enough to consider it for 
the purpose of designing countermeasures.

Our method to evaluate the correctness of an attack tree is to check
Admissibility and the (global) Meet property. If it holds, then we say
that the attack tree construction is correct w.r.t.\ to the analyzed
system. We then look at the stronger properties.  Depending on the
situation, the expert might want to ensure either the (global)
Under-Match or the (global) Over-Match property. If the tree fails to
verify the desired property with respect to a given system $\system$,
then it needs to be reshaped before it can be employed for the
security analysis of the real system modeled by $\system$.

\section{Complexity issues}
\label{sec:logiccomplexity}

In this section, we address the complexity of deciding our four
correctness properties introduced in Definition~\ref{def:correctness}.
Table~\ref{tab:complexityrecap} gives an
overview of the obtained results.  In the case of the \OR and
the \SAND operators, all the correctness properties are decided in
polynomial time, which is promising in practice.  However, for
the \AND operator, checking the Admissibility property and the Meet
property is \NP-complete, and checking the Under-Match property is
\coNP-complete. These last two problems are therefore
intractable~\cite{garey2002computers}, but recall that their complexity in
practice might be lower thanks to much favorable kinds of instances
(see for example \cite{leyton2014understanding}).

\begin{table}[h]
  \centering
	 \resizebox{0.8\textwidth}{!}{
  \begin{tabular}{|c| c| c| c| c| c|}
    \cline{2-6}
        \multicolumn{1}{r|}{} & Admissibility & Meet & Under-Match & Over-Match & Match \\
    \hline
    \OR & \PTIME & \PTIME & \PTIME & \PTIME & \PTIME \\
    \hline
    \SAND & \PTIME & \PTIME & \PTIME & \PTIME & \PTIME \\
    \hline
    \AND & \NP-c & \NP-c & \coNP-c & \coNP & \coNP \\
    \hline
  \end{tabular}
	}
  \caption{Complexities of the correctness properties.}
  \label{tab:complexityrecap}
\end{table}

We first state two lemmas that will be useful for our complexity analysis. 
Lemma~\ref{lem:pathsizeflat} provides a bound to the size of paths we need
to consider in the system for the verification of correctness properties. 
Lemma~\ref{lem:SANDANDcheckP} provides 
the complexity of checking if a path reflects a particular combination of subgoals. 

\begin{lemma}
  \label{lem:pathsizeflat}
  Let $\system$ be a transition system, $\OP\in\Set{\OR,\AND,\SAND}$, and $\atinit_1,\atfinal_1,\ldots,
  \atinit_n,\atfinal_n \in \AtmSet$. For every path $\pathvar$ in $\agsem{\OP(\ag{\atinit_1}{\atfinal_1},\ldots,\ag{\atinit_n}{\atfinal_n})}$,
  there exists a path $\pathvar' $ of linear size in $\size{\system}$ and $n$ that is also in $
  \agsem{\OP(\ag{\atinit_1}{\atfinal_1},\ldots,\ag{\atinit_n}{\atfinal_n})}$
  and which preserves the ends of $\pathvar$, i.e.,   $\pathvar'(0)=\pathvar(0)$ and
  $\pathvar'(\size{\pathvar'})=\pathvar(\size{\pathvar})$. More precisely,
  $\size{\pathvar'} \in \OO{(2n - 1)\size{\StateSet}}$.
\end{lemma}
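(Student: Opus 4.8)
The plan is to handle the three operators separately, in each case transforming an arbitrary witnessing path $\pathvar$ into a short one while keeping the same endpoints. The unifying tool is that a path from a state to a state, when all its cycles are removed, becomes elementary and hence has size at most $\size{\StateSet}$; so the work is to apply this cycle-removal carefully so that the defining membership condition of $\agsem{\OP(\dots)}$ is preserved.

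First, the $\OR$ case. Here $\pathvar \in \agsem{\ag{\atinit_i}{\atfinal_i}}$ for some $i$, i.e.\ $\pathvar$ goes from $\atinit_i$ to $\atfinal_i$. Removing all cycles from $\pathvar$ leaves an elementary path $\pathvar'$ with the same first and last state, so $\pathvar'$ still goes from $\atinit_i$ to $\atfinal_i$ and $\size{\pathvar'} \le \size{\StateSet}$, which is within the claimed bound.

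Next, the $\SAND$ case. Now $\pathvar = \pathvar_1.\pathvar_2.\dots.\pathvar_n$ where $\pathvar_i \in \agsem{\ag{\atinit_i}{\atfinal_i}}$. I would apply the $\OR$-case argument to each segment $\pathvar_i$ individually, replacing it by an elementary $\pathvar_i'$ from $\atinit_i$ to $\atfinal_i$ with $\size{\pathvar_i'} \le \size{\StateSet}$; since consecutive segments still share their connecting endpoint (the shared state is unchanged by cycle removal at the boundary), the concatenation $\pathvar' := \pathvar_1'.\pathvar_2'.\dots.\pathvar_n'$ is well-defined, lies in $\agsem{\SAND(\dots)}$, has the same global endpoints as $\pathvar$, and satisfies $\size{\pathvar'} \le n\,\size{\StateSet}$.

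The $\AND$ case is the main obstacle, because here shortening cannot be done segment-by-segment independently: a parallel decomposition $\Set{\pathvar_1,\dots,\pathvar_n}$ of $\pathvar$ has the $\pathvar_i$ as \emph{overlapping} factors that must jointly cover every unit interval $\interval{j}{j+1}$ of $\pathvar$, so cutting a cycle out of $\pathvar$ forces a coordinated cut in every $\pathvar_i$ whose anchoring spans that cycle, and one must be sure no coverage gap appears. The approach I would take is to look at the anchorings $\interval{k_1}{l_1},\dots,\interval{k_n}{l_n}$ on $\interval{0}{\size{\pathvar}}$ and note that the endpoints $k_i, l_i$ together with $0$ and $\size{\pathvar}$ partition $\interval{0}{\size{\pathvar}}$ into at most $2n-1$ consecutive blocks; on each such block the set of indices $i$ whose anchoring covers it is constant. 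If some block is longer than $\size{\StateSet}$, it contains a repeated state, hence a cycle that is \emph{internal} to that block; removing this cycle from $\pathvar$ and correspondingly shrinking $l_i$ (and shifting $k_{i'}, l_{i'}$ for later anchorings) for exactly the indices $i$ active on that block keeps each $\pathvar_i$ a path from $\atinit_i$ to $\atfinal_i$ (its endpoints are untouched, only an interior cycle is deleted), keeps the block structure, and keeps every unit interval covered, while strictly shortening $\pathvar$. Iterating until every block has size at most $\size{\StateSet}$ yields $\pathvar'$ with $\size{\pathvar'} \le (2n-1)\size{\StateSet}$, still in $\agsem{\AND(\dots)}$ and with the same endpoints, which is exactly the stated bound. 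The care needed is purely bookkeeping: checking that deleting an interior cycle of a block does not merge or destroy the active-index structure of neighboring blocks, and that termination is guaranteed since each step strictly decreases $\size{\pathvar}$.
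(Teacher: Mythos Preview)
Your proposal is correct and follows essentially the same approach as the paper: the $\OR$ and $\SAND$ cases are identical, and for $\AND$ both you and the paper partition $\interval{0}{\size{\pathvar}}$ into the at most $2n-1$ blocks determined by the sorted anchoring endpoints $k_1,l_1,\dots,k_n,l_n$ and then remove cycles strictly inside each block so that every anchoring endpoint survives. The only cosmetic difference is that the paper removes \emph{all} cycles from each block in one shot (making each block elementary) and then does the bookkeeping once, whereas you phrase the same operation as an iterated one-cycle-at-a-time removal with termination by strict decrease of $\size{\pathvar}$.
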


 \begin{proof} Let us start with the following remark
    \begin{remark}
       \label{rem:elem}
       Note that if $\pathvar \in \agsem{\ag{\atinit}{\atfinal}}$, then the path
      $\pathvar'$ obtained by removing all the cycles of $\pathvar$ is also in
      $\agsem{\ag{\atinit}{\atfinal}}$, and that $\pathvar(0) = \pathvar'(0)$ and
      $\pathvar(\size{\pathvar}) = \pathvar'(\size{\pathvar'})$. By construction, we also have 
      $\size{\pathvar'} \leq \size{\StateSet}$.
     \end{remark}
    
     We make a proof for each on the three operators $\OR$, $\SAND$, and $\AND$.
    \begin{itemize}
    \item 
      Let $\pathvar \in
      \agsem{\OR(\ag{\atinit_1}{\atfinal_1},\ldots,
        \ag{\atinit_n}{\atfinal_n})}$. Then, there 
				exists an $i \in \Setleqn{n}$, 
      such that $\pathvar \in \agsem{\ag{\atinit_i}{\atfinal_i}}$.  We
      apply Remark~\ref{rem:elem} to construct a path $\pathvar'$ with
      $\pathvar(0) = \pathvar'(0)$, $\pathvar(\size{\pathvar}) =
      \pathvar'(\size{\pathvar'})$, and $\size{\pathvar'} \leq \size{\StateSet}
      \in \OO{(2n - 1)\size{\StateSet}}$.
      
    \item 
      Let $\pathvar \in
      \agsem{\SAND(\ag{\atinit_1}{\atfinal_1},\ldots,
        \ag{\atinit_n}{\atfinal_n})}$. Then, 
				there exists a concatenation
      $\pathvar =$\linebreak 
			$\pathvar_1 . \pathvar_2 . \ldots . \pathvar_n$, such
      that $\pathvar_i \in \agsem{\ag{\atinit_i}{\atfinal_i}}$, for all $i
      \in \Setleqn{n}$. By Remark~\ref{rem:elem}, we can build $n$
      elementary paths $\pathvar_i'$ each of them being an element of
    $\agsem{\ag{\atinit_i}{\atfinal_i}}$ preserving the extremal
    states of their corresponding $\pathvar_i$. The concatenation
    $\pathvar'=\pathvar_1'\pathvar_2' \ldots \pathvar_n'$ is well
    defined, and belongs to
    $\agsem{\SAND(\ag{\atinit_1}{\atfinal_1},\ldots,
      \ag{\atinit_n}{\atfinal_n})}$. 
    Since for each $i$, $\size{\pathvar_i'} \leq \size{\system}$, we have $\size{\pathvar'} \leq
    n \size{\system} \in \OO{(2n - 1)\size{\StateSet}}$.
    Also note that
    $\pathvar(0)=\pathvar_1(0)=\pathvar_1'(0)=\pathvar'(0)$ and
    $\lst{\pathvar}=\lst{\pathvar_n}=\lst{\pathvar_n'}=\lst{\pathvar'}$.
    
  \item 
    Let $\pathvar \in
    \agsem{\AND(\ag{\atinit_1}{\atfinal_1},\ldots,
      \ag{\atinit_n}{\atfinal_n})}$. Then, there exist
			the anchoring intervals\linebreak
    $\interval{k_1}{l_1},\dots , \interval{k_n}{l_n}$
     of a parallel decomposition of $\pathvar$, such
    that, for each $1 \leq i \leq n$, $\pathvar(k_i) \in
    \valuation(\atinit_i)$ and $\pathvar(l_i) \in
    \valuation(\atfinal_i)$.
    
    We let $m_1 \leq m_2 \leq \ldots \leq
    m_{2n}$ be the result of sorting the elements $k_1,l_1, \dots ,
    k_n,l_n$. Note that necessarily $m_1=0$ and $m_{2n}=\size{\pathvar}$,
    and that $\pathvar$ equals to the concatenation
    $\pathvar\interval{m_1}{m_2}. \pathvar\interval{m_2}{m_3} . \dots
    \pathvar\interval{m_{2n-1}}{m_{2n}}$. For $1 \leq r
    \leq 2n-1$, let $\rho_r$ be the elementary path obtained from
    $\pathvar\interval{m_r}{m_{r+1}}$ by removing all cycles (we may have
    get a single state if $\pathvar\interval{m_r}{m_{r+1}}$ was a cycle,
    so that $\size{\rho_r}=0$).
    
    Clearly, the concatenation
    $\pathvar'=\rho_1\ldots\rho_{2n-1}$ is well-defined and
    $\size{\pathvar'}\leq (2n-1)\size{\system}$, since each $\rho_r$ is
    of the size at most $\size{\system}$. Also, by construction, we have
    $\fst{\pathvar'}=\fst{\rho_1}= \fst{\pathvar}$ and
    $\lst{\pathvar'}=\lst{\rho_{2n-1}}=\lst{\pathvar}$.
    
    It remains to be proven
    that $\pathvar' \in \agsem{\AND(\ag{\atinit_1}{\atfinal_1},\ldots,
      \ag{\atinit_n}{\atfinal_n})}$.
    
    Note that, for all $i\in\Setleqn{n}$, the states $\pathvar(k_i)$ and
    $\pathvar(l_i)$ also occur in $\pathvar'$, at positions we now
    characterize.
    
    Let $m'_1 \leq \ldots \leq m'_{2n}$ be the positions in $\pathvar'$
    defined by $m'_1=0$ and $m'_{r+1} = m'_r + \size{\rho_r}$,
    for $r\in\Setleqn{2n-1}$.  
    Note that $\pathvar'(m'_r)=\pathvar(m_r)$ and that
    $m'_{2n}=\size{\pathvar'}$. Also, if $m_r = m_{r+1}$, then
    $\size{\rho_r}=0$, so that $m'_r=m'_{r+1}$. This implies in
    particular that 
		\begin{equation}
		\label{eq:(*)}
		\tag{*}
		\text{if\ } m'_r < m'_{r+1}, \text{\ then\ } m_r < m_{r+1}.
		\end{equation}
    
    Let $k'_i=m'_r$ (resp. $l'_i = m'_r$) whenever $k_i=m_r$
    (resp.\ $l_i=m_r$). Notice that $k'_i\leq l'_i$, because $k_i \leq
    l_i$.
    
    By construction, each factor
    $\pathvar'_i=\pathvar'[k'_i,l'_i] \in
    \agsem{\ag{\atinit_i}{\atfinal_i}}$.
    
    We finish the proof by showing that the intervals
    $\interval{k'_1}{l'_1}, \ldots, \interval{k'_n}{l'_n}$ are the
    anchorings of a parallel decomposition of $\pathvar'$. This amounts
    to showing that, for any
    $[j,j+1] \subseteq \Setleqn[0]{\size{\pathvar'}}$, there exists an $i$,
    such that $\interval{j}{j+1} \subseteq \interval{k'_i}{l'_i}$.
    
    Let $j \in \Setleqn[0]{\size{\pathvar'}-1}$, we define $r(j)$ as the
    greatest $r \in \Setleqn{2n}$, such that $m'_r \leq j$ (it exists
    since $m'_1=0$). Moreover, we have $r(j) < 2n$ since
    $m'_{r(j)}\leq j < \size{\pathvar'}=m'_{2n}$. Therefore, 
    $r(j)+1 \leq 2n$ so that $m'_{r(j)+1}$ is well-defined and is such
    that we have
    $\interval{j}{j+1} \subseteq \interval{m'_{r(j)}}{m'_{r(j)+1}}$.
    
    We establish that this non-trivial
    $\interval{m'_{r(j)}}{m'_{r(j)+1}}$, which contains
    $\interval{j}{j+1}$, is contained in some $\interval{k'_i}{l'_i}$.

    Clearly since $m'_{r(j)}<m'_{r(j)+1}$, we have
    $m_{r(j)}<m_{r(j)+1}$ (see the condition~\eqref{eq:(*)} above).
    
    By the fact that
    $\pathvar \in \agsem{\AND(\ag{\atinit_1}{\atfinal_1},\ldots,
      \ag{\atinit_n}{\atfinal_n})}$, there exists $i$, st.
    $\interval{m_{r(j)}}{m_{r(j)}+1} \subseteq \interval{k_i}{l_i}$.
    
    Let $t_1$ and $t_2$ be such that $k_i=m_{t_1}$ and
    $l_i=m_{t_2}$. Note that we have $m_{r(j)+1}\leq l_i$: indeed,
    because $m_{r(j)}<l_i=m_{t_2}$, we necessarily have
    $r(j)+1\leq t_2$, which implies $m_{r(j)+1} \leq m_{t_2}=l_i$.
    Now, since
    $k_i=m_{t_1} \leq m_{r(j)} < m_{r(j)+1} \leq m_{t_2} =l_i$, we
    have $t_1 \leq r(j)$ and $r(j)+1 \leq t_2$, so that
    $k'_i=m'_{t_1} \leq m'_{r(j)} <  m'_{r(j)+1} \leq m'_{t_2} = l'_i$
    which concludes. \qed
  \end{itemize}
  
 \end{proof}

\begin{lemma}
  \label{lem:SANDANDcheckP} Let $\system$ be a transition system, 
  $\atinit_1,\atfinal_1,\ldots, \atinit_n,\atfinal_n$ be propositions in $\AtmSet$, and let
  $\pathvar \in \SPathSet$. Determining whether
  $\pathvar \in \agsem{\OP(\ag{\atinit_1}{\atfinal_1}, \ag{\atinit_2}{\atfinal_2},\dots, \ag{\atinit_n}{\atfinal_n})}$
  can be done in time $\OO{\size{\pathvar}+ n}$, if $\OP=\SAND$, and in time
  $\OO{\size{\pathvar}n}$, if $\OP=\AND$.
\end{lemma}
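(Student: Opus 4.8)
The plan is, in each case, to exhibit a direct greedy algorithm that scans $\pathvar$ a bounded number of times; throughout I assume that a membership test $s\in\valuation(p)$ costs constant time (i.e.\ $\valuation$ restricted to the $2n$ propositions in play is available as a table indexed by states). For the $\SAND$ case, I would first unfold the definition of concatenation of sets: $\pathvar\in\agsem{\SAND(\ag{\atinit_1}{\atfinal_1},\dots,\ag{\atinit_n}{\atfinal_n})}$ holds iff there are positions $0=i_0\leq i_1\leq\dots\leq i_n=\size{\pathvar}$ with $\pathvar\interval{i_{j-1}}{i_j}\in\agsem{\ag{\atinit_j}{\atfinal_j}}$ for each $j$, that is, $\pathvar(0)\in\valuation(\atinit_1)$, $\pathvar(\size{\pathvar})\in\valuation(\atfinal_n)$, and $\pathvar(i_j)\in\valuation(\atfinal_j)\cap\valuation(\atinit_{j+1})$ for $1\leq j\leq n-1$. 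The algorithm checks the two boundary conditions first, rejecting if either fails, and then computes $i_1,\dots,i_{n-1}$ greedily from the left: given $i_{j-1}$, it scans forward and sets $i_j$ to the least position $\geq i_{j-1}$ whose state lies in $\valuation(\atfinal_j)\cap\valuation(\atinit_{j+1})$, rejecting if no such position is $\leq\size{\pathvar}$; it accepts if all steps succeed. Correctness is a standard exchange argument: if any feasible tuple $i_1^{\ast},\dots,i_{n-1}^{\ast}$ exists then, by induction on $j$, the greedy value satisfies $i_j\leq i_j^{\ast}$, so the scan at step $j$ never fails, and since the boundary conditions do not depend on the $i_j$'s the algorithm accepts. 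The scan at step $j$ inspects $i_j-i_{j-1}+1$ positions, so all scans together inspect at most $\size{\pathvar}+n-1$ positions; with $\OO{1}$ bookkeeping per step and the $\OO{1}$ boundary checks this yields $\OO{\size{\pathvar}+n}$.

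For the $\AND$ case, I would unfold the definition of parallel decomposition: $\pathvar\in\agsem{\AND(\ag{\atinit_1}{\atfinal_1},\dots,\ag{\atinit_n}{\atfinal_n})}$ holds iff there are anchoring intervals $\interval{k_1}{l_1},\dots,\interval{k_n}{l_n}\subseteq\interval{0}{\size{\pathvar}}$ such that (i) $\pathvar(k_i)\in\valuation(\atinit_i)$ and $\pathvar(l_i)\in\valuation(\atfinal_i)$ for each $i$ (call such an interval \emph{admissible for $i$}), and (ii) $\bigcup_{i=1}^{n}\interval{k_i}{l_i}=\interval{0}{\size{\pathvar}}$. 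The observation I would establish first is that the $n$ choices decouple: fixing $i$ and letting $L_i$ (resp.\ $R_i$) be the set of positions of $\pathvar$ whose state belongs to $\valuation(\atinit_i)$ (resp.\ $\valuation(\atfinal_i)$), an interval admissible for $i$ exists iff $L_i\neq\emptyset$, $R_i\neq\emptyset$ and $\min L_i\leq\max R_i$, and in that case $I_i:=\interval{\min L_i}{\max R_i}$ is itself admissible for $i$ and contains every interval admissible for $i$. Since enlarging an interval only helps condition (ii), it follows that $\pathvar\in\agsem{\AND(\dots)}$ iff $I_i$ is defined for every $i$ and $\bigcup_{i=1}^n I_i=\interval{0}{\size{\pathvar}}$. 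The algorithm is then: for each $i$, scan $\pathvar$ once to obtain $\min L_i$ and $\max R_i$ (or to detect that no admissible interval exists, in which case reject), and finally test whether the $n$ intervals $I_i$ cover $\interval{0}{\size{\pathvar}}$, e.g.\ by marking, for each $i$, every unit interval contained in $I_i$ and then checking that all unit intervals of $\pathvar$ are marked. The $n$ scans cost $\OO{n\size{\pathvar}}$ and the covering test costs $\OO{n\size{\pathvar}}$, for a total of $\OO{n\size{\pathvar}}$.

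The mechanical parts — translating the two semantic definitions into the position-level characterisations above and the exchange argument for $\SAND$ — are routine. The step that needs care is the decoupling claim in the $\AND$ case: that replacing each admissible interval by the unique maximal one $I_i$ preserves membership in $\agsem{\AND(\dots)}$. This rests on the monotonicity of the covering condition together with the fact that, for a fixed $i$, two admissible intervals always admit an admissible common superset, so the admissible intervals have a greatest element.
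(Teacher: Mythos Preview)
Your proposal is correct and follows essentially the same approach as the paper: for $\SAND$ you use the same greedy left-to-right scan for the intermediate positions $i_j\in\valuation(\atfinal_j)\cap\valuation(\atinit_{j+1})$, and for $\AND$ you compute, for each $i$, the maximal admissible interval $[\min L_i,\max R_i]$ and then check coverage, which is exactly the paper's argument (the paper phrases the maximality step as choosing $k_i$ smallest and $l_i$ largest and then invokes its Lemma on parallel decompositions for the final covering check). Your write-up is in fact slightly more explicit than the paper's on the exchange argument for $\SAND$ and on why the maximal interval dominates all admissible ones for $\AND$.
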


  \begin{proof}
    It is easy to show that
    $\pathvar \in \agsem{\SAND(\ag{\atinit_1}{\atfinal_1},
      \ag{\atinit_2}{\atfinal_2},\dots, \ag{\atinit_n}{\atfinal_n})}$
    \textiff we have $\pathvar(0) \in \valuation(\atinit_1)$,
    $\pathvar(\size{\pathvar}) \in \valuation(\atfinal_n)$, and there exists
    a sequence of positions
    $0 \leq k_1 \leq k_2 \leq \ldots \leq k_{n-1} \leq \size{\pathvar}$, with
    $\pathvar(k_i) \in \valuation(\atfinal_i) \inter
    \valuation(\atinit_{i+1})$.  The algorithm checks these three properties
    with a linear exploration of $\pathvar$. One can easily show that it
    requires at most $n + \size{\pathvar} -1$ steps.

    We show an algorithm to check if
    $\pathvar \in \agsem{\AND(\ag{\atinit_1}{\atfinal_1},
      \ag{\atinit_2}{\atfinal_2},\dots, \ag{\atinit_n}{\atfinal_n})}$.
    The algorithm first verifies in time polynomial in
    $\OO{n\size{\pathvar}}$ that, for each $i\in\Setleqn{n}$, there exist
    $k_i,l_i \in \interval{0}{\size{\pathvar}}$, such that
    $\pathvar(k_i) \in \valuation(\atinit_i)$, and
    $\pathvar(l_i) \in \valuation(\atfinal_i)$, and $k_i \leq l_i$. If not,
    the algorithm rejects.  Otherwise, we have
    $\pathvar\interval{k_i}{l_i} \in \agsem{\ag{\atinit_i}{\atfinal_i}}$,
    and we can tune the algorithm with the same complexity so that
    position $k_i$ (resp.\ $l_i$) along path $\pathvar$ is as small
    (resp.\ big) as possible. In other words, the factors
    $\pathvar\interval{k_i}{l_i}$ of $\pathvar$ are the longest ones among
    those in $\agsem{\ag{\atinit_i}{\atfinal_i}}$.
    Now, we show that
    $\pathvar \in
    \agsem{\AND(\ag{\atinit_1}{\atfinal_1},\ag{\atinit_2}{\atfinal_2},\dots,
      \ag{\atinit_n}{\atfinal_n})}$ \textiff
    $\Set{\pathvar\interval{k_1}{l_1},\ldots,\pathvar\interval{k_n}{l_n}}$
    is a parallel decomposition of $\pathvar$ for the anchorings
    $\interval{k_1}{l_1},\ldots,\interval{k_n}{l_n}$.
      
    Clearly, if
    $\Set{\pathvar\interval{k_1}{l_1},\ldots,\pathvar\interval{k_n}{l_n}}$
    is a parallel decomposition of $\pathvar$, then we have
    $\pathvar \in
    \agsem{\AND(\ag{\atinit_1}{\atfinal_1},\ag{\atinit_2}{\atfinal_2},\dots,
      \ag{\atinit_n}{\atfinal_n})}$.

    Conversely, suppose that
    $\pathvar \in
    \agsem{\AND(\ag{\atinit_1}{\atfinal_1},\ag{\atinit_2}{\atfinal_2},\dots,
      \ag{\atinit_n}{\atfinal_n})}$. Let
    $\Set{\pathvar_1,\ldots,\pathvar_n}$ be a parallel decomposition of
    $\pathvar$, with $\pathvar_i \in \agsem{\ag{\atinit_i}{\atfinal_i}}$, for
    $i\in\Setleqn{n}$, and let $\interval{k'_1}{l'_1}$,
    $\interval{k'_2}{l'_2},\dots, \interval{k'_n}{l'_n}$ be the
    respective anchorings of $\pathvar_1,\ldots,\pathvar_n$ in $\pathvar$. By
    the definition of the $k_i$'s and the $l_i$'s, we necessarily have
    $k_i \leq k'_i \leq l'_i \leq l_i$.  Also, as
    $\Set{\pathvar_1,\ldots,\pathvar_n}$ is a parallel decomposition of
    $\pathvar$, from Definition~\ref{def:decomp}, every
    $\interval{j}{j+1} \subseteq \interval{0}{\size{\pathvar}}$ is
    contained in $\interval{k'_i}{l'_i}$, hence in
    $\interval{k_i}{l_i}$. We conclude that
    $\Set{\pathvar\interval{k_1}{l_1},\ldots,\pathvar\interval{k_n}{l_n}}$
    is a parallel decomposition of $\pathvar$ for the anchorings
    $\interval{k_1}{l_1},\ldots,\interval{k_n}{l_n}$.

    We conclude the proof of Lemma~\ref{lem:SANDANDcheckP} by
    resorting on Lemma~\ref{lem:decomp} to check in time
    $\OO{n\size{\pathvar}}$ if
    $\Set{\pathvar\interval{k_1}{l_1},\ldots,\pathvar\interval{k_n}{l_n}}$
    is a parallel decomposition of $\pathvar$ for the respective
    anchorings $\interval{k_1}{l_1},\ldots,\interval{k_n}{l_n}$.
\qed
  \end{proof}

\subsection{Checking Admissibility (column 1 of Table~\ref{tab:complexityrecap})}
We now investigate the complexity of deciding
the admissibility of an attack tree. We first establish two propositions.

\begin{proposition}
  \label{prop:notemptyagORSAND}
  Given a
        system $\system$ and
        $\atinit_1,\atfinal_1,\ldots,\atinit_n,\atfinal_n\in \AtmSet$,
        deciding $\agsem{\ag{\atinit}{\atfinal}} \neq \emptyset$,
        deciding
        $\agsem{\OR(\ag{\atinit_1}{\atfinal_1},\ldots, \ag{\atinit_n}{\atfinal_n})} \neq \emptyset$,
        and deciding
        $\agsem{\SAND(\ag{\atinit_1}{\atfinal_1},\ldots, \ag{\atinit_n}{\atfinal_n})} \neq \emptyset$
        are decision problems in \PTIME.
\end{proposition}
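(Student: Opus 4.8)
The plan is to reduce each of the three non-emptiness questions to (iterated) graph reachability in $\system$, which is solvable in polynomial time by a standard traversal. The basic observation is that, by the path semantics of an atomic goal, $\agsem{\ag{\atinit}{\atfinal}}$ is exactly the set of paths whose first state lies in $\valuation(\atinit)$ and whose last state lies in $\valuation(\atfinal)$; hence $\agsem{\ag{\atinit}{\atfinal}} \neq \emptyset$ if and only if $\reach(\valuation(\atinit)) \inter \valuation(\atfinal) \neq \emptyset$. Since $\reach(\StateSet')$ is computed by one breadth-first traversal started simultaneously from every state of $\StateSet'$, this test takes time $\OO{\size{\system}}$ (the sets $\valuation(p)$ for the given propositions being part of the input). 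The degenerate case of an empty path (a single state satisfying both $\atinit$ and $\atfinal$) is subsumed, since $\StateSet' \subseteq \reach(\StateSet')$ by definition of $\reach$; so deciding $\agsem{\ag{\atinit}{\atfinal}}\neq\emptyset$ is in \PTIME.

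For the \OR operator the claim is immediate from $\agsem{\OR(\ag{\atinit_1}{\atfinal_1},\ldots,\ag{\atinit_n}{\atfinal_n})} = \bigcup_{i=1}^{n}\agsem{\ag{\atinit_i}{\atfinal_i}}$: the union is non-empty iff at least one $\agsem{\ag{\atinit_i}{\atfinal_i}}$ is, so it suffices to run the atomic test of the previous paragraph $n$ times, which is still polynomial (in fact $\OO{n\,\size{\system}}$).

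For the \SAND operator I would decide non-emptiness of the iterated concatenation $\agsem{\ag{\atinit_1}{\atfinal_1}}.\cdots.\agsem{\ag{\atinit_n}{\atfinal_n}}$ by a left-to-right dynamic programming over sets of states. Define $B_1 = \reach(\valuation(\atinit_1)) \inter \valuation(\atfinal_1)$ and, for $2 \leq i \leq n$, $B_i = \reach(B_{i-1} \inter \valuation(\atinit_i)) \inter \valuation(\atfinal_i)$. By induction on $i$, one shows that $B_i$ is precisely the set of states that occur as the last state of some path of $\agsem{\ag{\atinit_1}{\atfinal_1}}.\cdots.\agsem{\ag{\atinit_i}{\atfinal_i}}$. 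For the forward inclusion, cut such a path at its $(i-1)$-th split point: its prefix witnesses, by the induction hypothesis, that the split state belongs to $B_{i-1}$, it also satisfies $\atinit_i$ as the start of the $i$-th factor, and the $i$-th factor is a genuine path from it to the final state, which therefore lies in $\reach(B_{i-1}\inter\valuation(\atinit_i))\inter\valuation(\atfinal_i) = B_i$. For the backward inclusion, take $s\in B_i$, a witnessing state $s'\in B_{i-1}\inter\valuation(\atinit_i)$ with $s\in\reach(\{s'\})$, a path of the first $i-1$ blocks ending at $s'$ (which exists by the induction hypothesis), and any path from $s'$ to $s$; gluing them is legitimate since consecutive blocks share their extremal state, and the result lies in the first $i$ blocks and ends at $s$. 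Hence $\agsem{\SAND(\ag{\atinit_1}{\atfinal_1},\ldots,\ag{\atinit_n}{\atfinal_n})} \neq \emptyset$ iff $B_n \neq \emptyset$; the computation performs $n$ reachability traversals and $\OO{n}$ set intersections, so it runs in time $\OO{n\,\size{\system}}$, establishing membership in \PTIME. Equivalently, one could phrase this as reachability in the layered product of $n+1$ copies of $\system$.

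The one delicate point — and the closest thing to an obstacle — is the correctness of this \SAND recursion: one must be careful that a repeated split point corresponds to an empty $i$-th factor, which then only requires a single state in $\valuation(\atinit_i)\inter\valuation(\atfinal_i)$, and that the order of operations (first intersect with $\valuation(\atinit_i)$, then apply $\reach$, then intersect with $\valuation(\atfinal_i)$) is exactly the one matching the definition of concatenation of path sets. Both are handled by the inclusion $\StateSet'\subseteq\reach(\StateSet')$ together with the precise shape of the recursion; once this invariant is verified, the three \PTIME claims follow.
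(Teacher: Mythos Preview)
Your proof is correct and follows essentially the same approach as the paper: atomic non-emptiness via reachability, \OR\ via the union of atomic tests, and \SAND\ via an iterated forward reachability that propagates a set of reachable states through the $n$ blocks. The only cosmetic difference is that your invariant tracks the \emph{ending} states $B_i$ of the prefix concatenation $\agsem{\ag{\atinit_1}{\atfinal_1}}.\cdots.\agsem{\ag{\atinit_i}{\atfinal_i}}$, whereas the paper tracks the \emph{starting} states of each successive block (the split points in $\valuation(\atfinal_i)\cap\valuation(\atinit_{i+1})$); both yield the same $\OO{n\,\size{\system}}$ algorithm.
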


\begin{proof}\ 

\begin{enumerate}
\item \label{enum:caseatg} Determining if $\agsem{\ag{\atinit}{\atfinal}}$ is not empty amounts to performing a standard reachability analysis in $\system$, which can be done in polynomial time.
\item 
By the path semantics of the $\OR$ operator, $\agsem{\OR(\ag{\atinit_1}{\atfinal_1},\ldots,
  \ag{\atinit_n}{\atfinal_n})} \neq \emptyset$ if and only if there is $i \in \Setleqn{n}$, such
  that $\agsem{\ag{\atinit_j}{\atfinal_j}} \neq \emptyset$, which by
  the case~\ref{enum:caseatg} of this proof, yields a polynomial time
  algorithm.
\item 
  Checking that $\agsem{\SAND(\ag{\atinit_1}{\atfinal_1},\ldots,
    \ag{\atinit_n}{\atfinal_n})} \neq \emptyset$ can be done by a
  forward analysis: for $1 \leq i \leq n$, we define a sequence of
  state sets $\StateSet_i$ by induction over $i$ as follows: we let
  $\StateSet_1 = \valuation(\atinit_1)$. Next, for $2 \leq i < n$,
  $\StateSet_{i+1} = \valuation(\iota_{i+1}) \cap
  \valuation(\atfinal_i) \cap \reach(\StateSet_i)$.
  Clearly,
$\agsem{\SAND(\ag{\atinit_1}{\atfinal_1},\ldots,
    \ag{\atinit_n}{\atfinal_n})} \neq \emptyset$ \textiff $S_n \neq \emptyset$.
  Moreover, computing $\StateSet_n$ takes at most $n
  \size{\StateSet}$ steps, since each $\StateSet_{i+1}$ is computed from $\StateSet_i$ in at most
  $\size{\StateSet}$ steps. \qed
  \end{enumerate}
\end{proof}

On the contrary, it is more complex to deal with the \AND operator.

\begin{proposition}
  \label{prop:notemptyAND}
  Given a system $\system$ and $\atinit_1,\atfinal_1,\ldots,\atinit_n,\atfinal_n\in \AtmSet$, deciding the non-emp\-ti\-ness $\agsem{\AND(\ag{\atinit_1}{\atfinal_1},\ldots,
  \ag{\atinit_n}{\atfinal_n})} \neq \emptyset$ is \NPTIME-complete.
\end{proposition}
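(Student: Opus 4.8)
The plan is to establish membership in \NPTIME{} and \NPTIME-hardness separately. Membership will follow almost immediately from the small-witness bound of Lemma~\ref{lem:pathsizeflat}, whereas hardness will require a polynomial reduction from the satisfiability problem \textsc{Sat}.

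\emph{Membership.} Suppose $\agsem{\AND(\ag{\atinit_1}{\atfinal_1},\ldots,\ag{\atinit_n}{\atfinal_n})}\neq\emptyset$. Applying Lemma~\ref{lem:pathsizeflat} to any witnessing path, this set already contains a path $\pathvar$ with $\size{\pathvar}\in\OO{(2n-1)\size{\StateSet}}$, hence of size polynomial in the input. A nondeterministic procedure therefore guesses such a $\pathvar$ together with anchorings $\interval{k_1}{l_1},\ldots,\interval{k_n}{l_n}$, and checks in polynomial time that $\pathvar$ is a path of $\system$, that $\pathvar(k_i)\in\valuation(\atinit_i)$, $\pathvar(l_i)\in\valuation(\atfinal_i)$ and $k_i\leq l_i$ for every $i$, and --- using Lemma~\ref{lem:decomp} --- that $\Set{\pathvar\interval{k_1}{l_1},\ldots,\pathvar\interval{k_n}{l_n}}$ is a parallel decomposition of $\pathvar$ for these anchorings. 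By the path semantics of $\AND$, an accepting run is precisely the exhibition of an element of $\agsem{\AND(\ldots)}$, so the problem is in \NPTIME.

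\emph{Hardness.} From a \textsc{Sat} instance $\phi=C_1\wedge\cdots\wedge C_k$ over variables $x_1,\ldots,x_m$, I would build a transition system $\system$ shaped as a chain of $m$ ``diamonds'': states $A_1,\ldots,A_{m+1}$ and, for each $j$, two states $T_j$ and $B_j$, with transitions $A_j\to T_j\to A_{j+1}$ and $A_j\to B_j\to A_{j+1}$, plus a self-loop on $A_{m+1}$ to keep $\Transition$ left-total. Intuitively a path crossing all the diamonds chooses $T_j$ (reading ``$x_j$ true'') or $B_j$ (``$x_j$ false'') in the $j$-th one, thereby encoding an assignment. I would take a proposition $\alpha$ labelling only $A_1$, a proposition $\omega$ labelling only $A_{m+1}$, and, for each clause $C_r$, a proposition $\lambda_r$ labelling exactly the literal states of $C_r$, i.e.\ $T_j$ if $x_j\in C_r$ and $B_j$ if $\neg x_j\in C_r$. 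The reduction sends $\phi$ to the instance asking whether $\agsem{\AND(\ag{\alpha}{\omega},\ag{\lambda_1}{\lambda_1},\ldots,\ag{\lambda_k}{\lambda_k})}\neq\emptyset$, which is clearly computable in polynomial time.

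The correctness proof has two directions. If $\phi$ is satisfiable, the corresponding spine path $\pathvar$ of length $2m$ from $A_1$ to $A_{m+1}$ works: the factor $\pathvar\interval{0}{2m}$ realizes $\ag{\alpha}{\omega}$ and already covers every unit interval of $\pathvar$, and for each clause a true literal yields a position of $\pathvar$ labelled $\lambda_r$, realized by the empty factor there --- so the family of factors is a parallel decomposition of $\pathvar$. Conversely, since $A_1$ has no predecessor and $A_{m+1}$ only its self-loop, any path realizing $\ag{\alpha}{\omega}$ must start at $A_1$ and, forced by the diamond structure, follow the spine to $A_{m+1}$; I would then argue that the covering requirement forces $\pathvar$, up to trailing self-loops at $A_{m+1}$, to coincide with this spine, whose choices define an assignment, and that each factor realizing $\ag{\lambda_r}{\lambda_r}$ exhibits a literal state of $C_r$ on $\pathvar$, i.e.\ a literal of $C_r$ set to true, so $\phi$ holds. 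I expect the two delicate points to be: showing that the covering condition genuinely pins the path down to the spine --- the key being that the self-loop edge at $A_{m+1}$ cannot lie inside any $\ag{\lambda_r}{\lambda_r}$-factor, since $\lambda_r$ never labels $A_{m+1}$ --- and verifying that permitting empty factors for the clause goals creates no spurious witness, which is precisely the role of the goal $\ag{\alpha}{\omega}$, whose single factor is responsible for covering the entire path.
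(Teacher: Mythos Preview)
Your proposal is correct. The membership argument is essentially the paper's: both guess a short path via Lemma~\ref{lem:pathsizeflat} and verify in polynomial time; the paper invokes Lemma~\ref{lem:SANDANDcheckP} for the check, while you guess the anchorings explicitly and appeal to Lemma~\ref{lem:decomp}, which amounts to the same thing.

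For hardness, both you and the paper reduce from \textsc{Sat} by encoding an assignment as a path through a binary-branching gadget and labelling literal states with clause propositions, but the encodings differ. The paper uses a single source state $s$ labelled $start$, followed directly by literal pairs $p_i/\neg p_i$ with transitions $\literal_i\to\literal_{i+1}$, and asks whether $\agsem{\AND(\ag{start}{C_1},\ldots,\ag{start}{C_m})}\neq\emptyset$; since only $s$ carries $start$ and $s$ has no predecessor, every factor is anchored at $k_i=0$, and the covering condition alone forces $\max_i l_i=\size{\pathvar}$. Your diamond chain inserts intermediate $A_j$ states and a dedicated goal $\ag{\alpha}{\omega}$ whose job is to cover the whole path, so that the clause goals $\ag{\lambda_r}{\lambda_r}$ can be pointwise witnesses. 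This costs you a few more states, one extra goal, and the self-loop analysis you correctly identify as the delicate point; in return, your system is explicitly left-total (as Definition~\ref{def:system} requires) and the two roles ``force the spine'' and ``witness each clause'' are cleanly separated. Both reductions are polynomial and the underlying idea is the same.
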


\begin{proof}\ 

\noindent
  {\bf \NPTIME-easy:}\\ 
	We can use the algorithm of
  Lemma~\ref{lem:SANDANDcheckP}, with the algorithm guessing a path of
  polynomial size according to Lemma~\ref{lem:pathsizeflat}.

\smallskip
\noindent
  {\bf \NPTIME-hard:}\\ 
We recall that a \emph{set of clauses} $\clauses$ over a set of
  (propositional) variables $\Set{p_1, \dots, p_r}$ is composed of
  elements (the clauses) $\clause\in\clauses$ such that $\clause$ is a
  set of literals, that is either a variable $p_i$ or its negation
  $\neg p_i$. The set $\clauses$ is \emph{satisfiable} if there exists
  a valuation of the variables $p_1, \dots, p_r$ that renders all the
  clauses of $\clauses$ true. The \SAT problem is: given a set of
  clauses $\clauses$, to decide if it is satisfiable. It is
  well-known that \SAT is an \NPTIME-complete
  problem \cite{cook1971complexity}.

  Now, let $\clauses = \Set{\clause_1,\dots, \clause_m}$ be a set of
  clauses over variables $\Set{p_1, \dots, p_r}$ (ordered by their index) that is an input of the \SAT problem.
  Classically, we let $\size{\clauses}$ be the sum of the sizes of all the clauses in $\clauses$, where the size of a clause is the number of its literals.

  \begin{figure}[ht]
   \begin{center}
 \scalebox{0.7}{
      \begin{tikzpicture}[node distance=1.5cm]
        \node[draw,circle,text width=4mm,align=center] (0) at (0,0)
        {$s$}; \node[draw,circle,text width=4mm,align=center, above
        right of=0] (p1) {$p$}; 
        \node[draw,circle,text width=4mm,align=center, below right
        of=0] (np1) {$\neg p$}; \node[draw,circle,text
        width=4mm,align=center, right of=p1] (p2) {$q$};
        \node[draw,circle,text width=4mm,align=center, right of=np1]
        (np2) {$\neg q$}; \node[draw,circle,text
        width=4mm,align=center, right of=p2] (pl){$r$};
        \node[draw,circle,text width=4mm,align=center, right of=np2]
        (npl) {$\neg r$};

        \draw[->] (0) -> (p1); \draw[->] (0) -> (np1); \draw[->] (p1)
        -> (p2); \draw[->] (p1) -> (np2); \draw[->] (np1) -> (p2);
        \draw[->] (np1) -> (np2); \draw[->] (p2) -> (pl); \draw[->]
        (p2) -> (npl); \draw[->] (np2) -> (pl); \draw[->] (np2) ->
        (npl);

        \node[left of=0, node distance=0.8cm] {$start$}; \node[above
        of=p1, node distance=0.6cm] {$\clause_1,\clause_2$};
        \node[below of=np2, node distance=0.6cm] {$\clause_1$};
        \node[above of=pl, node distance=0.6cm] {$\clause_2$};
\end{tikzpicture}
}
\caption{The system $\system_{\Set{\clause_1,\clause_2}}$ where
  $\clause_1 = p \vee \neg q$ and $\clause_2 = p \vee r$.}
\label{fig:andnphard}
\end{center}
\end{figure}
  In the following, we let the symbol $\literal_i$ denote either $p_i$
  or $\neg p_i$, for every $i \in \{1,\ldots,r\}$.  We define the
  labeled transition system
  $\system_{\clauses}=(\StateSet_\clauses,\Transition_\clauses,\valuation_\clauses)
  $ over the set of propositions $\Set{start,C_1,\dots, C_m}$, where
  $start$ is a fresh proposition, as follows. The set of states is $\StateSet_\clauses
  = \bigcup_{i=1}^r\Set{p_i,\neg p_i} \union \Set{s}$, 
  where $s$ is a fresh state; the transition
  relation is  $\Transition_\clauses
  = \Set{(\literal_i,\literal_{i+1}) \suchthat
  i \in \Setleqn{r-1}} \union \Set{(s,\literal_1)}$; and the labeling of states 
  $\valuation_\clauses: \Set{start,C_1,\dots,
  C_m} \to \parts{\StateSet}$ is such that $\valuation_\clauses(start)=\Set{s}$ and
$\valuation_\clauses(\clause_i) = \Set{\literal \in \clause_i}$ for
  $1 \leq i \leq m$. Note that, by definition, $\size{\system_\clauses}$ is
      polynomial in $\size{\clauses}$.
For example, the transition system corresponding to the set formed by
      clauses $\clause_1= p \vee \neg q$ and $\clause_2= p \vee r$ is
      depicted in Fig.~\ref{fig:andnphard}.

It is then easy to establish that 
$\semantics{\AND(\ag{start}{\clause_1}, \ag{start}{\clause_2}, \dots, \ag{start}{\clause_m})}^{\system_\clauses}\neq \emptyset$ \textiff
$\clauses$ is satisfiable. \qed
\end{proof}
According to the formal definition of the statement
``$\atree$ is admissible w.r.t.\ $\system$'' as defined in
Sect.~\ref{sec:aat}, it is easy to combine the results of
Propositions~\ref{prop:notemptyagORSAND} and \ref{prop:notemptyAND},
to conclude that verifying that a tree is admissible is an \NPTIME-complete problem.

\subsection{Checking the Meet property (column 2 of Table~\ref{tab:complexityrecap})}
\paragraph{Preliminaries on temporal logic.} We consider a syntactic fragment of the temporal logic CTL
   \cite{clarke1981ctl} where the only temporal operator is
   ``eventually'', here denoted by symbol $\ctleventualy$, and where Boolean operators are conjunction and disjunction. The syntax
   of the formulas is $\ctlformula \gramderiv
   p \gramor \ctlformula \ctland \ctlformula \gramor  \ctlformula \ctlor \ctlformula  \gramor \ctleventualy \ctlformula$.
   The semantics of formulas is given
   with regard to a labeled transition system $\system
   = \systemtuple$: each formula $\ctlformula$ denotes a subset of
   states, which we note $\ctlsem{\ctlformula}$, and which is defined
   by induction: $\ctlsem{p} = \valuation(p)$,
   $\ctlsem{\ctlformula \ctland \ctlformula'}
   = \ctlsem{\ctlformula} \cap \ctlsem{\ctlformula'}$,
   $\ctlsem{\ctlformula \ctlor \ctlformula'}
   = \ctlsem{\ctlformula} \cup \ctlsem{\ctlformula'}$, and
   $\ctlsem{\ctleventualy \ctlformula} = \coreach
   ( \ctlsem{\ctlformula} )$, where $\coreach$ is defined in
   Sect.~\ref{sec:TS}. Recall that
   $\state \in \ctlsem{\ctleventualy \ctlformula}$ \textiff, there is a
   path in $\system$ starting from $\state$ and that reaches a state in
   $\ctlsem{\ctlformula}$.
  It is well-established that computing $\ctlsem{\ctlformula}$ can be
  done in polynomial time in $\size{\system}$ and $\size{\ctlformula}$ (see for example \cite{schnoebelen2002complexity}).

We now turn to the complexity of verifying the Meet property.
  
\begin{proposition}
\label{prop:meetORSAND}  
Given a system $\system$ and $\atinit,\atfinal,\atinit_1,\atfinal_1,\ldots,\atinit_n,\atfinal_n\in \AtmSet$, the problem of deciding
$\agsem{\OR(\ag{\atinit_1}{\atfinal_1},\dots, \ag{\atinit_n}{\atfinal_n})} \inter \agsem{\ag{\atinit}{\atfinal}} \neq \emptyset$, and the problem of deciding\\
$\agsem{\SAND(\ag{\atinit_1}{\atfinal_1},\dots, \ag{\atinit_n}{\atfinal_n})} \inter \agsem{\ag{\atinit}{\atfinal}} \neq \emptyset$
are in \PTIME.
\end{proposition}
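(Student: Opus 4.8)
The plan is to reduce the Meet property, in both the $\OR$ and $\SAND$ cases, to a reachability-type question that can be expressed in the polynomial-time-checkable temporal fragment introduced just above, or handled directly by a forward analysis analogous to the one in Proposition~\ref{prop:notemptyagORSAND}.

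For the $\OR$ case, I would first observe that, by the path semantics of $\OR$ and distributivity of intersection over union,
\[
\agsem{\OR(\ag{\atinit_1}{\atfinal_1},\dots,\ag{\atinit_n}{\atfinal_n})}\inter\agsem{\ag{\atinit}{\atfinal}}
=\bigcup_{i=1}^{n}\bigl(\agsem{\ag{\atinit_i}{\atfinal_i}}\inter\agsem{\ag{\atinit}{\atfinal}}\bigr),
\]
so the whole intersection is nonempty \textiff\ there is an $i$ for which $\agsem{\ag{\atinit_i}{\atfinal_i}}\inter\agsem{\ag{\atinit}{\atfinal}}\neq\emptyset$. A path lies in that intersection exactly when it starts in a state of $\valuation(\atinit)\inter\valuation(\atinit_i)$ and ends in a state of $\valuation(\atfinal)\inter\valuation(\atfinal_i)$, i.e.\ \textiff\ $\bigl(\valuation(\atinit)\inter\valuation(\atinit_i)\bigr)\inter\coreach\bigl(\valuation(\atfinal)\inter\valuation(\atfinal_i)\bigr)\neq\emptyset$ (equivalently, some state satisfies the temporal formula $\atinit\ctland\atinit_i\ctland\ctleventualy(\atfinal\ctland\atfinal_i)$). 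Since $\coreach$ is computable in polynomial time and there are only $n$ values of $i$ to try, this is in \PTIME.

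For the $\SAND$ case, the idea is to refine the forward analysis of Proposition~\ref{prop:notemptyagORSAND}(3) by tracking, in addition to reachability along the sequence $\ag{\atinit_1}{\atfinal_1},\dots,\ag{\atinit_n}{\atfinal_n}$, the requirement that the whole path also go from $\atinit$ to $\atfinal$. A path witnessing nonemptiness of $\agsem{\SAND(\dots)}\inter\agsem{\ag{\atinit}{\atfinal}}$ is a concatenation $\pathvar_1.\pathvar_2.\dots.\pathvar_n$ with $\pathvar_i\in\agsem{\ag{\atinit_i}{\atfinal_i}}$, whose first state additionally satisfies $\atinit$ and whose last state additionally satisfies $\atfinal$. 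So I would set $\StateSet_1=\valuation(\atinit)\inter\valuation(\atinit_1)$ and, for $1\le i<n$, $\StateSet_{i+1}=\valuation(\atinit_{i+1})\inter\valuation(\atfinal_i)\inter\reach(\StateSet_i)$ exactly as before, and then answer \emph{yes} \textiff\ $\valuation(\atfinal)\inter\valuation(\atfinal_n)\inter\reach(\StateSet_n)\neq\emptyset$. Correctness follows because the only extra constraints, $\atinit$ at the start and $\atfinal$ at the end, affect only the first and last state of the concatenation and are faithfully accounted for by intersecting $\valuation(\atinit)$ into $\StateSet_1$ and $\valuation(\atfinal)$ into the final test; the inner gluing constraints are unchanged. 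Each $\reach$ and each intersection costs $\OO{\size{\StateSet}}$, so the total is $\OO{n\size{\StateSet}}$, hence \PTIME.

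The only mild subtlety — and the step I would be most careful about — is the boundary behaviour when some of the $\pathvar_i$ are empty paths, so that the start of $\pathvar$ (carrying $\atinit$) and the end of $\pathvar$ (carrying $\atfinal$) may coincide with intermediate gluing points or even with each other; one has to check that the inductive definition of the $\StateSet_i$ still captures these degenerate decompositions, which it does since empty paths correspond precisely to $\reach$ containing a state and the intersections are taken pointwise. With that checked, the claimed \PTIME\ bounds follow, proving Proposition~\ref{prop:meetORSAND}. \qed
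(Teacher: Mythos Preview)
Your proposal is correct. For the $\OR$ case it is essentially identical to the paper's proof: the paper also reduces to checking, for some $i$, whether $\ctlsem{\atinit\ctland\atinit_i\ctland\ctleventualy(\atfinal\ctland\atfinal_i)}\neq\emptyset$, which is exactly your $\coreach$ formulation.

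For the $\SAND$ case your argument is a mild variant of the paper's. The paper encodes the problem as a single nested temporal formula
\[
\ctlformula_{\SAND} := \atinit \ctland \atinit_1 \ctland \ctleventualy (\atfinal_1 \ctland \atinit_2 \ctland \ctleventualy (\atfinal_2 \ctland \dots \ctleventualy (\atfinal_n \ctland \atfinal)))
\]
and checks $\ctlsem{\ctlformula_{\SAND}}\neq\emptyset$, which unfolds into an iterated \emph{backward} reachability ($\coreach$) computation. You instead run the dual iterated \emph{forward} analysis via $\reach$, in the style of Proposition~\ref{prop:notemptyagORSAND}, absorbing the extra constraints $\atinit$ and $\atfinal$ into the first set and the final test. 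The two are equivalent and have the same complexity; your version has the small advantage of reusing the machinery of Proposition~\ref{prop:notemptyagORSAND} verbatim, while the paper's version makes the connection to the temporal-logic fragment explicit. Your remark on empty sub-paths is well placed and is the only point requiring care; it is indeed handled because $\reach(\StateSet')\supseteq\StateSet'$.
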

\begin{proof}\ 

\begin{enumerate}
\item Let $\ctlformula_{\OR} := \ctlbigor_{i=1}^{n} \atinit \ctland
  \atinit_i \ctland \ctleventualy (\atfinal \ctland \atfinal_i)$. We claim that $\agsem{\OR(\ag{\atinit_1}{\atfinal_1},\ldots, 
    \ag{\atinit_n}{\atfinal_n})} \cap \agsem{\ag{\atinit}{\atfinal}}
  \neq \emptyset$ iff $\ctlsem{\ctlformula_{\OR}} \neq \emptyset$.
  We easily conclude our proof from the claim and the fact that
  computing $\ctlsem{\ctlformula_{\OR}}$ can be done in polynomial
  time. 
    We now turn to the proof of the claim.
    Assume that
  $\agsem{\OR(\ag{\atinit_1}{\atfinal_1},\ldots,\ag{\atinit_n}{\atfinal_n})}
  \cap \agsem{\ag{\atinit}{\atfinal}} \neq \emptyset$, and pick some
  path $\pathvar \in
  \agsem{\OR(\ag{\atinit_1}{\atfinal_1},\ldots,\ag{\atinit_n}{\atfinal_n})}
  \cap \agsem{\ag{\atinit}{\atfinal}}$. Then, $\pathvar$ goes from
  $\atinit$ to $\atfinal$ and $\pathvar$ also goes from $\atinit_i$ to
  $\atfinal_i$, for some $i \in \Setleqn{n}$. By the semantics of the
  temporal operators, the state $\pathvar(0)$ is in $\ctlsem{\atinit
    \ctland \atinit_i \ctland \ctleventualy (\atfinal \ctland
    \atfinal_i)}$, entailing $\ctlsem{\ctlformula_{\OR}} \neq
  \emptyset$. Conversely, if $\ctlsem{\ctlformula_{\OR}} \neq
  \emptyset$, then pick some $\state \in
  \ctlsem{\ctlformula_{\OR}}$. By definition of $\ctlformula_{\OR}$,
  there must exist an $i \in \Setleqn{n}$, such that $\state \in
  \ctlsem{\atinit \ctland \atinit_i \ctland \ctleventualy (\atfinal
    \ctland \atfinal_i)}$, entailing the existence of some path from
  $s$, in $\valuation(\atinit)\cap \valuation(\atinit_i)$, to some
  state in $\valuation(\atfinal)\cap \valuation(\atfinal_i)$. Clearly, $\pathvar \in \agsem{\ag{\atinit_i}{\atfinal_i}}
  \cap \agsem{\ag{\atinit}{\atfinal}}$ which is a subset of $\agsem{\OR(\ag{\atinit_1}{\atfinal_1},\ldots,\ag{\atinit_n}{\atfinal_n})}
  \cap \agsem{\ag{\atinit}{\atfinal}}$.
\item 
  Let $\ctlformula_{\SAND} := \atinit \ctland \atinit_1 \ctland
  \ctleventualy (\atfinal_1 \ctland \atinit_2 \ctland \ctleventualy
  (\atfinal_2 \ctland \dots \ctleventualy (\atfinal_n \ctland
  \atfinal)))$. We claim that\\ $\agsem{\SAND(\ag{\atinit_1}{\atfinal_1},\ldots,
    \ag{\atinit_n}{\atfinal_n})} \cap \agsem{\ag{\atinit}{\atfinal}}
  \neq \emptyset$ iff $\ctlsem{\ctlformula_{\SAND}} \neq \emptyset$.
  We easily conclude our proof from the claim and the fact that
  computing $\ctlsem{\ctlformula_{\SAND}}$ can be done in polynomial
  time. 
    It remains to prove the claim.
    Assume $\agsem{\SAND(\ag{\atinit_1}{\atfinal_1},\ldots,
      \ag{\atinit_n}{\atfinal_n})} \cap \agsem{\ag{\atinit}{\atfinal}}
    \neq \emptyset$, and pick some path $\pathvar \in
    \agsem{\SAND(\ag{\atinit_1}{\atfinal_1},\ldots,
      \ag{\atinit_n}{\atfinal_n})} \cap
    \agsem{\ag{\atinit}{\atfinal}}$. Then, $\pathvar$ goes from $\atinit$ to
    $\atfinal$ and it is a concatenation of $n$ paths $\pathvar_i$ going
    from $\atinit_i$ to $\atfinal_i$. Hence, $\pathvar$ successively visits
    $\atinit \ctland \atinit_1$, $\atfinal_1 \ctland \atinit_2$, \dots, and
    $\atfinal_n \ctland \atfinal$. So $\pathvar \in
    \ctlsem{\ctlformula_{\SAND}}$. Conversely, if
    $\ctlsem{\ctlformula_{\SAND}} \neq \emptyset$, then let us 
		pick a state
    $s \in \ctlsem{\ctlformula_{\SAND}}$. By the semantics of the temporal
    operator this shows the existence of a path $\pathvar$ starting from $s$,
    which visits a sequence of $n$ states respectively in $\valuation(\atinit) \cap
    \valuation(\atinit_1)$, $\valuation(\atfinal_1) \cap
    \valuation(\atinit_2)$, \ldots, $\valuation(\atfinal_{n-1}) \cap
    \valuation(\atinit_n)$, and $\valuation(\atfinal_n) \cap
    \valuation(\atfinal)$. Therefore, $\pathvar \in \agsem{\SAND(\ag{\atinit_1}{\atfinal_1},\ldots, 
      \ag{\atinit_n}{\atfinal_n})} \cap \agsem{\ag{\atinit}{\atfinal}}$, which concludes. \qed
\end{enumerate}
\end{proof}

Again, the \AND operator  turns out to be intrinsically more complex to deal with.

\begin{proposition}
  \label{prop:meetAND}
Given a system $\system$ and $\atinit,\atfinal,\atinit_1,\atfinal_1,\ldots,\atinit_n,\atfinal_n\in \AtmSet$, deciding\\
$\agsem{\AND(\ag{\atinit_1}{\atfinal_1},\dots, \ag{\atinit_n}{\atfinal_n})} \inter \agsem{\ag{\atinit}{\atfinal}} \neq \emptyset$ is an \NPTIME-complete problem.
\end{proposition}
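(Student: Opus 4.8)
The plan is to prove \NPTIME-membership by a guess-and-check argument and \NPTIME-hardness by an immediate reduction from the non-emptiness problem of Proposition~\ref{prop:notemptyAND}.

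For membership, suppose $\agsem{\AND(\ag{\atinit_1}{\atfinal_1},\dots,\ag{\atinit_n}{\atfinal_n})} \inter \agsem{\ag{\atinit}{\atfinal}} \neq \emptyset$ and fix a witness path $\pathvar$ in this intersection. Applying Lemma~\ref{lem:pathsizeflat} to the expression $\AND(\ag{\atinit_1}{\atfinal_1},\dots,\ag{\atinit_n}{\atfinal_n})$, one obtains a path $\pathvar'$ with $\size{\pathvar'} \in \OO{(2n-1)\size{\StateSet}}$ --- hence of size polynomial in $\size{\system}$ and $n$ --- such that $\pathvar' \in \agsem{\AND(\ag{\atinit_1}{\atfinal_1},\dots,\ag{\atinit_n}{\atfinal_n})}$, $\pathvar'(0)=\pathvar(0)$, and $\pathvar'(\size{\pathvar'})=\pathvar(\size{\pathvar})$. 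Since $\pathvar$ goes from $\atinit$ to $\atfinal$ and $\pathvar'$ has the very same endpoints, $\pathvar'$ also goes from $\atinit$ to $\atfinal$, i.e.\ $\pathvar' \in \agsem{\ag{\atinit}{\atfinal}}$. The \NPTIME algorithm therefore guesses a path of that polynomial size and checks, in polynomial time, both that it lies in $\agsem{\AND(\ag{\atinit_1}{\atfinal_1},\dots,\ag{\atinit_n}{\atfinal_n})}$, using the $\OO{\size{\pathvar}n}$ procedure of Lemma~\ref{lem:SANDANDcheckP}, and that its first state is in $\valuation(\atinit)$ and its last state in $\valuation(\atfinal)$.

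For hardness, I would reduce from the problem ``given $\system$ and $\atinit_1,\atfinal_1,\dots,\atinit_n,\atfinal_n$, is $\agsem{\AND(\ag{\atinit_1}{\atfinal_1},\dots,\ag{\atinit_n}{\atfinal_n})}\neq\emptyset$?'', which is \NPTIME-hard by Proposition~\ref{prop:notemptyAND}. Given such an instance, add to $\AtmSet$ a fresh proposition $\top$ and extend the labeling of $\system$ by $\valuation(\top)=\StateSet$, so that every path of $\system$ goes from $\top$ to $\top$, i.e.\ $\agsem{\ag{\top}{\top}}=\SPathSet$. Output the instance $(\system,\top,\top,\atinit_1,\atfinal_1,\dots,\atinit_n,\atfinal_n)$ of the Meet problem. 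This transformation is clearly computable in logarithmic space, and since $\agsem{\AND(\ag{\atinit_1}{\atfinal_1},\dots,\ag{\atinit_n}{\atfinal_n})}\inter\agsem{\ag{\top}{\top}} = \agsem{\AND(\ag{\atinit_1}{\atfinal_1},\dots,\ag{\atinit_n}{\atfinal_n})}$, the Meet instance is positive iff the original non-emptiness instance is. (Equivalently, one may instantiate the \SAT reduction of Proposition~\ref{prop:notemptyAND} directly, taking $\top$ for both $\atinit$ and $\atfinal$ on the system $\system_{\clauses}$, so that the resulting Meet instance is positive iff $\clauses$ is satisfiable.)

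I do not expect a genuine obstacle here: all the combinatorial content --- the polynomial bound on witness lengths, the polynomial-time membership check for $\AND$-expressions, and the \NPTIME-hardness of $\AND$ non-emptiness --- has already been established in Lemmas~\ref{lem:pathsizeflat} and~\ref{lem:SANDANDcheckP} and in Proposition~\ref{prop:notemptyAND}, and the present proof is essentially an assembly of these ingredients. The only point demanding attention is that the path-shortening provided by Lemma~\ref{lem:pathsizeflat} preserves the two endpoints of the path; this is precisely what guarantees that the shortened witness remains in $\agsem{\ag{\atinit}{\atfinal}}$, and not merely in $\agsem{\AND(\ag{\atinit_1}{\atfinal_1},\dots,\ag{\atinit_n}{\atfinal_n})}$.
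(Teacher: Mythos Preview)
Your proof is correct and follows essentially the same approach as the paper's: membership via guessing a short path (Lemma~\ref{lem:pathsizeflat}) and checking it with Lemma~\ref{lem:SANDANDcheckP}, and hardness by a trivial reduction from Proposition~\ref{prop:notemptyAND} that makes $\atinit$ and $\atfinal$ hold at every state. Your write-up is in fact slightly more careful than the paper's, since you explicitly point out that the endpoint-preservation clause of Lemma~\ref{lem:pathsizeflat} is what keeps the shortened witness inside $\agsem{\ag{\atinit}{\atfinal}}$, and you also include the check that the guessed path lies in $\agsem{\ag{\atinit}{\atfinal}}$.
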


\begin{proof} \ 

\smallskip
\noindent
  {\bf \NP-easy:}\\
	We describe a non-deterministic polynomial time algorithm.
  This algorithm guesses a path $\pathvar \in \SPathSet$, of polynomial
  size in $\size{\system}$ and $n$ (this is justified by 
  Lemma~\ref{lem:pathsizeflat}), and checks whether $\pathvar \in
  \agsem{\AND(\ag{\atinit_1}{\atfinal_1},\ldots,\ag{\atinit_n}{\atfinal_n})}$,
  which can be done in polynomial time in the size of $\pathvar$, which is also in polynomial time in
  $\size{\system}$ and $n$ by the choice of $\pathvar$ (see  Lemma~\ref{lem:SANDANDcheckP}).

\smallskip
\noindent
  {\bf \NP-hard:} \\
We now show that deciding $\agsem{\AND(\ag{\atinit_1}{\atfinal_1},\ldots,
  \ag{\atinit_n}{\atfinal_n})} \inter \agsem{\agelem} \neq \emptyset$ is \NP-hard, by reducing the problem of deciding $\agsem{\AND(\ag{\atinit_1}{\atfinal_1},\ldots,
  \ag{\atinit_n}{\atfinal_n})} \neq \emptyset$. Recall that the
  latter is an \NPTIME-hard problem (Proposition~\ref{prop:notemptyAND}). 

Let $\system$ be a transition system, and let $
\atinit_1,\atfinal_1,\ldots,\atinit_n,\atfinal_n\in \AtmSet$. Consider
two fresh propositions, $\atinit$ and $\atfinal$, and define the
system $\system'=(\StateSet,\Transition,\valuation')$ which is like
$\system$ and such that all the states are labeled by both $\atinit$ and $\atfinal$, that is $\valuation'(\atinit)=\valuation'(\atfinal)=\StateSet$. Notice that $\size{\system'}$ is polynomial in $\size{\system}$.

  It is then very easy to establish that
  $\agsem[\system']{\AND(\ag{\atinit_1}{\atfinal_1},\ldots,
    \ag{\atinit_n}{\atfinal_n})} \inter \agsem[\system']{\agelem}$
  \textiff,
$\agsem{\AND(\ag{\atinit_1}{\atfinal_1},\ldots,
  \ag{\atinit_n}{\atfinal_n})} \neq \emptyset$, and this concludes the proof.
\qed
\end{proof}

As a consequence of Propositions~\ref{prop:meetORSAND}
and \ref{prop:meetAND}, it is \NPTIME-complete to verify that an
attack tree has the Meet property, but if we restrict to attack trees
that contain only \OR or \SAND operators, the problem becomes $\PTIME$.

\subsection{Checking the Under-Match property (column 3 of Table~\ref{tab:complexityrecap})} 

The $\OR$ and $\SAND$ operators do not pose any problem.

\begin{proposition}
      \label{prop:subseteqORSAND} Given a system $\system$ and
  $\atinit,\atfinal,\atinit_1,\atfinal_1,\ldots,\atinit_n,\atfinal_n\in \AtmSet$,
  deciding \\
  $\agsem{\OR(\ag{\atinit_1}{\atfinal_1},\dots, \ag{\atinit_n}{\atfinal_n})} \subseteq \agsem{\ag{\atinit}{\atfinal}}$,
  and  deciding  $\agsem{\SAND(\ag{\atinit_1}{\atfinal_1},\dots, \ag{\atinit_n}{\atfinal_n})} \subseteq \agsem{\ag{\atinit}{\atfinal}}$
  are decision problems in \PTIME.
\end{proposition}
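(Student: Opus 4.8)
The plan is to handle the two operators separately, in each case reducing the stated set inclusion to a small number of reachability computations in $\system$, each polynomial in $\size{\system}$.

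For $\OR$, I would start from the path semantics $\agsem{\OR(\ag{\atinit_1}{\atfinal_1},\dots,\ag{\atinit_n}{\atfinal_n})} = \bigcup_{i=1}^{n}\agsem{\ag{\atinit_i}{\atfinal_i}}$, so that $\agsem{\OR(\ag{\atinit_1}{\atfinal_1},\dots,\ag{\atinit_n}{\atfinal_n})} \subseteq \agsem{\ag{\atinit}{\atfinal}}$ holds if and only if $\agsem{\ag{\atinit_i}{\atfinal_i}} \subseteq \agsem{\ag{\atinit}{\atfinal}}$ for every $i$. Since membership of a path in an atomic goal depends only on its two endpoints, this last inclusion fails exactly when there is a path from $\valuation(\atinit_i)$ to $\valuation(\atfinal_i)$ whose first state lies outside $\valuation(\atinit)$ or whose last state lies outside $\valuation(\atfinal)$; equivalently, it holds iff $\valuation(\atinit_i) \cap \coreach(\valuation(\atfinal_i)) \subseteq \valuation(\atinit)$ and $\valuation(\atfinal_i) \cap \reach(\valuation(\atinit_i)) \subseteq \valuation(\atfinal)$. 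Each test needs one $\coreach$ or $\reach$ computation and a set inclusion check, so the whole $\OR$ case runs in polynomial time.

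For $\SAND$, I would use the concatenation semantics to characterize the exact set of states that can occur as the first, resp. last, state of a path in $\agsem{\SAND(\ag{\atinit_1}{\atfinal_1},\dots,\ag{\atinit_n}{\atfinal_n})}$, via a backward, resp. forward, iteration in the spirit of the forward analysis of Proposition~\ref{prop:notemptyagORSAND}. Concretely, set $B_n = \valuation(\atinit_n) \cap \coreach(\valuation(\atfinal_n))$ and $B_i = \valuation(\atinit_i) \cap \coreach(\valuation(\atfinal_i) \cap B_{i+1})$ for $i<n$; dually set $E_1 = \valuation(\atfinal_1) \cap \reach(\valuation(\atinit_1))$ and $E_i = \valuation(\atfinal_i) \cap \reach(\valuation(\atinit_i) \cap E_{i-1})$ for $i>1$. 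A routine induction --- reading the recursion forward to assemble a witnessing $\SAND$-path, and backward to propagate membership along a given decomposition --- shows that $B_1$ is exactly the set of first states and $E_n$ exactly the set of last states of paths in $\agsem{\SAND(\ag{\atinit_1}{\atfinal_1},\dots,\ag{\atinit_n}{\atfinal_n})}$. Hence that set is included in $\agsem{\ag{\atinit}{\atfinal}}$ iff $B_1 \subseteq \valuation(\atinit)$ and $E_n \subseteq \valuation(\atfinal)$: if both hold, every $\SAND$-path starts in $\valuation(\atinit)$ and ends in $\valuation(\atfinal)$; and if, say, some $s \in B_1 \setminus \valuation(\atinit)$ exists, then completing $s$ into a $\SAND$-path yields a path not in $\agsem{\ag{\atinit}{\atfinal}}$. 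Since each $B_i$ and $E_i$ is computed from its predecessor by one $\coreach$ or $\reach$ call, $B_1$, $E_n$ and the two inclusion tests are all obtained in polynomial time.

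The reachability computations and set operations are routine; the step that requires care is the $\SAND$ claim that $B_1$ and $E_n$ precisely capture the endpoints of $\SAND$-paths, and in particular its forward direction, where one must build a concrete $\SAND$-path from the chain of intermediate states witnessing $s \in B_1$ and correctly handle segments that reduce to a single state (empty factors). Once this characterization is established, the ``first state'' and ``last state'' constraints decouple automatically, because a state in $B_1$ can always be completed to a $\SAND$-path irrespective of how the path ends.
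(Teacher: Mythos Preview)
Your proposal is correct. The \OR case is exactly the paper's argument, down to the two set inclusions
$\valuation(\atinit_i)\cap\coreach(\valuation(\atfinal_i))\subseteq\valuation(\atinit)$ and
$\valuation(\atfinal_i)\cap\reach(\valuation(\atinit_i))\subseteq\valuation(\atfinal)$.

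For \SAND, your approach is a mild repackaging of the paper's. The paper expresses the failure of the inclusion as non-emptiness of a CTL formula
\[
\ctlformula_{\SAND}=\bigl(\neg\atinit\wedge\atinit_1\wedge\ctleventualy(\atfinal_1\wedge\atinit_2\wedge\cdots\ctleventualy\atfinal_n)\bigr)\;\vee\;\bigl(\atinit_1\wedge\ctleventualy(\atfinal_1\wedge\atinit_2\wedge\cdots\ctleventualy(\atfinal_n\wedge\neg\atfinal))\bigr),
\]
and relies on the CTL model-checking machinery introduced in the preliminaries. Unfolding the nested $\ctleventualy$'s, the set of states satisfying $\atinit_1\wedge\ctleventualy(\atfinal_1\wedge\atinit_2\wedge\cdots\ctleventualy\atfinal_n)$ is precisely your $B_1$, so the first disjunct tests $B_1\not\subseteq\valuation(\atinit)$; the second disjunct tests that some \SAND-path ends outside $\valuation(\atfinal)$, i.e.\ $E_n\not\subseteq\valuation(\atfinal)$. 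Thus both arguments perform the same nested predecessor computation; the paper delegates it to the CTL evaluator, while you write the fixpoint explicitly and make the endpoint characterization symmetric (one backward pass for $B_1$, one forward pass for $E_n$). Your version is slightly more self-contained since it does not rely on the CTL fragment; the paper's version is shorter because it reuses that infrastructure. Your caution about correctly assembling a witnessing \SAND-path from $s\in B_1$, including the degenerate case of empty factors, is well placed but the details are routine.
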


  \begin{proof}\ 
	
\noindent
    \textbf{Case for \OR:}\\ Let $\system$ be a transition system and 
  $\atinit,\atfinal,\atinit_1,\atfinal_1,\ldots,\atinit_n,\atfinal_n\in \AtmSet$. 
  By the path semantics of the \OR operator, checking the Under-Match property amounts to deciding 
  \begin{equation}
    \label{eq:eq0}
    \UNION_{i \in \Setleqn{n}}\agsem{\ag{\atinit_i}{\atfinal_i}} \subseteq \agsem{\ag{\atinit}{\atfinal}}.
  \end{equation}
We consider an algorithm that checks that for all $i \in \Setleqn{n}$ the following holds
  \begin{eqnarray}
\label{eq:eq1}    \valuation(\atinit_i) \cap \coreach(\valuation(\atfinal_i)) \subseteq  \valuation(\atinit)\\
\label{eq:eq2}    \valuation(\atfinal_i) \cap \reach(\valuation(\atinit_i)) \subseteq    \valuation(\atfinal).
  \end{eqnarray}
  It accepts the input $\system,
  \atinit,\atfinal,\atinit_1,\atfinal_1,\ldots,\atinit_n,\atfinal_n$ if
  this is the case, and rejects it otherwise.

Notice that the computation of the set $\valuation(\atinit_i) \cap \coreach(\valuation(\atfinal_i))$ and $\valuation(\atfinal_i) \cap
\reach(\valuation(\atinit_i))$ requires only polynomial time in $\size{\system}$ and so does the checking of inclusions in equations~\eqref{eq:eq1} and~\eqref{eq:eq2}.

At this point, the termination of our algorithm is clear.
Regarding its correctness, it is enough to show that the conjunction
of all equations~\eqref{eq:eq1} and~\eqref{eq:eq2} (for all $i$'s) is
equivalent to equation~\eqref{eq:eq0}.  The proof runs easily once one
has noticed that the sets of states $\valuation(\atinit_i) \cap
\coreach(\valuation(\atfinal_i))$ and $\valuation(\atfinal_i) \cap
\reach(\valuation(\atinit_i))$, for $i \in \Setleqn{n}$, 
host the starting and the ending points of all paths in
$\agsem{\ag{\atinit_i}{\atfinal_i}}$, respectively.

\smallskip
\noindent
\textbf{Case for \SAND:}\\
  {\it Claim:} $\agsem{\SAND(\ag{\atinit_1}{\atfinal_1},\ldots,
    \ag{\atinit_n}{\atfinal_n})} \subseteq \agsem{\ag{\atinit}{\atfinal}}$ \textiff
  $\ctlsem{\ctlformula_{\SAND}} = \emptyset$, where
  $$
  \begin{array}{lll}
    \ctlformula_{\SAND} &:= &\ctlnot \atinit \ctland \atinit_1 \ctland \ctleventualy (\atfinal_1 \ctland \atinit_2 \ctland \dots \ctleventualy (\atfinal_n )) \\
    & & \ctlor\\
    & & \atinit_1 \ctland \ctleventualy (\atfinal_1 \ctland \atinit_2 \ctland  \dots \ctleventualy (\atfinal_n \ctland \ctlnot \atfinal)).
  \end{array}
    $$

Assume $\agsem{\SAND(\ag{\atinit_1}{\atfinal_1},\ldots,
  \ag{\atinit_n}{\atfinal_n})} \not\subseteq
\agsem{\ag{\atinit}{\atfinal}}$. This is equivalent to saying that
there exists a path $\pathvar$ which is the concatenation of paths
$\pathvar_1,\ldots,\pathvar_n$ (where $\pathvar_i$ goes from $\atinit_i$ to
$\atfinal_i$), but that does not go from $\atinit$ to $\atfinal$. The
reader can easily deduce that necessarily $\pathvar(0) \in
\ctlsem{\ctlformula_{\SAND}}$, entailing $\ctlsem{\ctlformula_{\SAND}}
= \emptyset$.

  Conversely, let $s \in \ctlsem{\ctlformula_{\SAND}}$. By the
  semantics of $\ctlsem{\ctlformula_{\SAND}}$, one can easily
  verify that there exists a path $\pathvar$ starting from $s$ and which is
  the concatenation of paths $\pathvar_1,\ldots,\pathvar_n$ (where $\pathvar_i$
  goes from $\atinit_i$ to $\atfinal_i$) and such that either $\pathvar$
  does not start from $\atinit$, or $\pathvar$ does not end in  $\atfinal$.

  Whichever latter case we are in, $\pathvar$ is a witness that
  $\agsem{\SAND(\ag{\atinit_1}{\atfinal_1},\ldots, 
    \ag{\atinit_n}{\atfinal_n})}$ is not included in
  $\agsem{\ag{\atinit}{\atfinal}}$, which concludes.
\qed
  \end{proof}

As expected, the  \AND operator yields a more complex problem to solve.
\begin{proposition}
  \label{prop:subseteqAND} Given a system $\system$ and $\atinit,\atfinal,\atinit_1,\atfinal_1,\ldots, \atinit_n,\atfinal_n\in \AtmSet$, deciding\\
$\agsem{\AND(\ag{\atinit_1}{\atfinal_1},\dots, \ag{\atinit_n}{\atfinal_n})} \subseteq \agsem{\ag{\atinit}{\atfinal}}$ is a \coNP-complete problem.
\end{proposition}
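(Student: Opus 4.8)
The statement splits into membership in \coNP\ and \coNP-hardness, and the plan is to handle each through its complement. For membership I would show that the complement problem, deciding $\agsem{\AND(\ag{\atinit_1}{\atfinal_1},\ldots,\ag{\atinit_n}{\atfinal_n})}\not\subseteq\agsem{\ag{\atinit}{\atfinal}}$, is in \NPTIME; for hardness I would give a polynomial-time reduction from the \AND-nonemptiness problem, which is \NPTIME-complete by Proposition~\ref{prop:notemptyAND}, to that complement problem.

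For the \coNP-easy direction, the non-deterministic algorithm for the complement would guess a path $\pathvar\in\SPathSet$ and accept exactly when $\pathvar\in\agsem{\AND(\ag{\atinit_1}{\atfinal_1},\ldots,\ag{\atinit_n}{\atfinal_n})}$ and $\pathvar\notin\agsem{\ag{\atinit}{\atfinal}}$ (i.e., $\pathvar(0)\notin\valuation(\atinit)$ or $\pathvar(\size{\pathvar})\notin\valuation(\atfinal)$). The crucial observation is that a witness of polynomial size always exists when non-inclusion holds: starting from any witness $\pathvar$, Lemma~\ref{lem:pathsizeflat} yields a path $\pathvar'$ of size $\OO{(2n-1)\size{\StateSet}}$ still lying in $\agsem{\AND(\ag{\atinit_1}{\atfinal_1},\ldots,\ag{\atinit_n}{\atfinal_n})}$ with the same two endpoints as $\pathvar$; since membership in $\agsem{\ag{\atinit}{\atfinal}}$ depends only on the endpoints, $\pathvar'$ is still outside $\agsem{\ag{\atinit}{\atfinal}}$. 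Verifying the first membership on the guessed path is polynomial by Lemma~\ref{lem:SANDANDcheckP} and the second check is immediate, so the complement lies in \NPTIME\ and the problem is in \coNP.

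For the \coNP-hard direction, I would reduce the problem ``$\agsem{\AND(\ag{\atinit_1}{\atfinal_1},\ldots,\ag{\atinit_n}{\atfinal_n})}\neq\emptyset$'' to the complement of the inclusion problem. Given $\system=\systemtuple$ and $\atinit_1,\atfinal_1,\ldots,\atinit_n,\atfinal_n$, I would introduce two fresh propositions $\atinit,\atfinal$ and form $\system'=(\StateSet,\Transition,\valuation')$, identical to $\system$ except that $\valuation'(\atinit)=\valuation'(\atfinal)=\emptyset$; this is clearly polynomial. Then $\agsem[\system']{\ag{\atinit}{\atfinal}}=\emptyset$, so non-inclusion in $\system'$ is equivalent to $\agsem[\system']{\AND(\ag{\atinit_1}{\atfinal_1},\ldots,\ag{\atinit_n}{\atfinal_n})}\neq\emptyset$, which in turn equals $\agsem[\system]{\AND(\ag{\atinit_1}{\atfinal_1},\ldots,\ag{\atinit_n}{\atfinal_n})}\neq\emptyset$ because the labelings of the $\atinit_i$ and $\atfinal_i$ are unchanged. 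This makes the complement \NPTIME-hard and hence the inclusion problem \coNP-hard.

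I do not anticipate a real obstacle: the two preliminary lemmas already provide the small-model property and the polynomial verification procedure. The only delicate points are bookkeeping ones --- checking that the shrinking of Lemma~\ref{lem:pathsizeflat} simultaneously preserves membership in the \AND-semantics and the failure of $\ag{\atinit}{\atfinal}$, and taking care to reduce \emph{to the complement} of the inclusion problem, using empty valuations so that $\agsem{\ag{\atinit}{\atfinal}}=\emptyset$ collapses non-inclusion to \AND-nonemptiness.
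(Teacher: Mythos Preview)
Your proposal is correct and matches the paper's approach exactly on the hardness side: both reduce \AND-nonemptiness (Proposition~\ref{prop:notemptyAND}) to the complement of the inclusion problem by introducing fresh propositions $\atinit,\atfinal$ with empty valuation, so that $\agsem[\system']{\ag{\atinit}{\atfinal}}=\emptyset$. Your membership argument via Lemmas~\ref{lem:pathsizeflat} and~\ref{lem:SANDANDcheckP} is sound and in fact more explicit than the paper, which leaves the \coNP-easy direction implicit in its written proof.
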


  \begin{proof}
	To prove the proposition, we explain how 
		to reduce the problem of deciding if $\agsem{\AND(\ag{\atinit_1}{\atfinal_1},\ldots,\ag{\atinit_n}{\atfinal_n})}$  $\neq \emptyset$
(see Proposition~\ref{prop:notemptyAND}), which is \NP-hard, to the complementary
problem of deciding if $\agsem{\AND(\ag{\atinit_1}{\atfinal_1},\ldots,\ag{\atinit_n}{\atfinal_n})}
\subseteq \agsem{\ag{\atinit}{\atfinal}}$, which is to decide the emptiness of
$\agsem{\AND(\ag{\atinit_1}{\atfinal_1},\ldots,\ag{\atinit_n}{\atfinal_n})}
\setminus \agsem{\ag{\atinit}{\atfinal}}$.

   Let $\system$ be a transition system, and $\atinit_1,\atfinal_1,\ldots,\atinit_n,\atfinal_n\in\AtmSet$. We let $\system'$ be the
   transition system $\system$ with two extra fresh propositions 
   $\atinit$ and $\atfinal$ that label none of the states.

   It can easily be shown that
$\agsem{\AND(\ag{\atinit_1}{\atfinal_1},\ldots,\ag{\atinit_n}{\atfinal_n})}
 = \emptyset$ if, and only if, \linebreak
$\agsem[\system']{\AND(\ag{\atinit_1}{\atfinal_1},\ldots,\ag{\atinit_n}{\atfinal_n})}
\setminus \agsem[\system']{\ag{\atinit}{\atfinal}} = \emptyset$, which trivially holds since 
$\agsem[\system']{\ag{\atinit}{\atfinal}} = \emptyset$ by construction.
\qed
  \end{proof}

\subsection{Checking the Over-Match property (column 4 of Table~\ref{tab:complexityrecap})}
Again, the cases for the \OR and \AND operators are smooth, whereas the
case of the \AND operator  is more difficult.

\begin{proposition}
      \label{prop:supseteqORSAND}
        \label{prop:supseteqAND}
  Given a system $\system$ and $\atinit,\atfinal,\atinit_1,\atfinal_1,\ldots,\atinit_n,\atfinal_n\in \AtmSet$, deciding\\
  $\agsem{\OR(\ag{\atinit_1}{\atfinal_1},\dots, \ag{\atinit_n}{\atfinal_n})} \supseteq \agsem{\ag{\atinit}{\atfinal}}$ and deciding $\agsem{\SAND(\ag{\atinit_1}{\atfinal_1},\dots, \ag{\atinit_n}{\atfinal_n})} \supseteq \agsem{\ag{\atinit}{\atfinal}}$
  are decision problems in \PTIME. On the contrary deciding\\
  $\agsem{\AND(\ag{\atinit_1}{\atfinal_1},\dots, \ag{\atinit_n}{\atfinal_n})} \supseteq \agsem{\ag{\atinit}{\atfinal}}$ is a decision problem in \coNP.
\end{proposition}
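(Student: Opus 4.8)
The plan is to handle the three operators separately: for $\OR$ and $\SAND$ I will give direct polynomial-time procedures, and for $\AND$ I will argue that the complement of the Over-Match problem lies in $\NPTIME$.

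\emph{The case of $\OR$.} The plan is to reduce the inclusion $\agsem{\ag{\atinit}{\atfinal}}\subseteq\agsem{\OR(\ag{\atinit_1}{\atfinal_1},\dots,\ag{\atinit_n}{\atfinal_n})}=\bigcup_{i=1}^{n}\agsem{\ag{\atinit_i}{\atfinal_i}}$ to a condition on pairs of states. Indeed, a path belongs to $\agsem{\ag{\atinit}{\atfinal}}$ (resp.\ to $\agsem{\ag{\atinit_i}{\atfinal_i}}$) only through a constraint on its two endpoints, so the inclusion holds if and only if for every $s\in\valuation(\atinit)$ and every $t\in\valuation(\atfinal)\cap\reach(\{s\})$ there is an $i$ with $s\in\valuation(\atinit_i)$ and $t\in\valuation(\atfinal_i)$; equivalently, for every $s\in\valuation(\atinit)$ one must have $\valuation(\atfinal)\cap\reach(\{s\})\subseteq\bigcup_{i\,:\,s\in\valuation(\atinit_i)}\valuation(\atfinal_i)$. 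I would test this directly: for each of the (at most $\size{\StateSet}$) states $s\in\valuation(\atinit)$, compute $\reach(\{s\})$, form the two sets above, and check the inclusion. Since $\reach$ and set operations are computable in polynomial time, the whole test is polynomial in $\size{\system}$ and $n$.

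\emph{The case of $\SAND$.} Here the plan is to turn Over-Match into a reachability question in a product of $\system$ with a small deterministic monitor. The key point to establish is that a path $\pathvar$ lies in $\agsem{\SAND(\ag{\atinit_1}{\atfinal_1},\dots,\ag{\atinit_n}{\atfinal_n})}$ if and only if the \emph{greedy} cutting — while completing the $j$-th factor, cut it at the very first visited state lying in $\valuation(\atfinal_j)\cap\valuation(\atinit_{j+1})$, and similarly for the last factor with $\valuation(\atfinal_n)$ — produces all $n$ factors; this follows from a standard exchange argument, since taking the earliest cut leaves the most room for the subsequent ones. This greedy behaviour is realised by a deterministic automaton $\mathcal{M}$ with states $\{\bot,1,\dots,n,\top\}$ (``failed'', ``working on factor $j$'', ``done'') whose transition on reading one state is computable in polynomial time (possibly closing several factors at once). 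I would then form the product $\system\times\mathcal{M}$, of size $\OO{\size{\system}\,n}$, and observe that $\agsem{\ag{\atinit}{\atfinal}}\subseteq\agsem{\SAND(\dots)}$ fails precisely when some product state $(s,q)$ with $s\in\valuation(\atfinal)$ and $q\neq\top$ is reachable from a product state $(s_0,q_0)$ with $s_0\in\valuation(\atinit)$ — a polynomial-time graph reachability check.

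\emph{The case of $\AND$, and the main obstacle.} To place the Over-Match problem for $\AND$ in $\coNP$, I will show its complement — deciding $\agsem{\ag{\atinit}{\atfinal}}\setminus\agsem{\AND(\ag{\atinit_1}{\atfinal_1},\dots,\ag{\atinit_n}{\atfinal_n})}\neq\emptyset$ — is in $\NPTIME$: guess a path $\pathvar$, check that it goes from $\atinit$ to $\atfinal$, and check (using Lemma~\ref{lem:SANDANDcheckP}, in time $\OO{\size{\pathvar}\,n}$) that it does \emph{not} belong to $\agsem{\AND(\dots)}$. The crux, and the step I expect to be the main obstacle, is to bound the length of a witness. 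For this I would prove that inserting a cycle into an $\AND$-decomposable path keeps it $\AND$-decomposable: since every unit interval of the original path is covered by some factor of a parallel decomposition, there is always a factor straddling the position where the cycle is inserted, and that factor can be enlarged to absorb the cycle without changing its two endpoints — hence without changing which goal $\ag{\atinit_i}{\atfinal_i}$ it satisfies — while the remaining factors are simply shifted. Contraposing, if $\pathvar\in\agsem{\ag{\atinit}{\atfinal}}$ is not $\AND$-decomposable, then removing all its cycles yields, by Remark~\ref{rem:elem}, an elementary path $\pathvar'$ of size at most $\size{\StateSet}$ that is still in $\agsem{\ag{\atinit}{\atfinal}}$ and still not $\AND$-decomposable. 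Thus a polynomial-size witness always exists, the guess-and-check algorithm is a correct $\NPTIME$ procedure for the complement, and Over-Match for $\AND$ is therefore in $\coNP$. The delicate part throughout is the cycle-insertion argument, where the boundary cases — insertion at position $0$, at position $\size{\pathvar}$, and paths of length $0$ — must be verified individually.
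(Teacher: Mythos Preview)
Your proposal is correct. For $\OR$ and $\AND$ you take essentially the same route as the paper: the $\OR$ case is the same endpoint-pair test (the paper writes it as a triple nested loop over $s$, $s'$ and $i$, you phrase it as a set inclusion, but the content is identical); and for $\AND$ the paper, like you, guesses a cycle-free path and appeals to Lemma~\ref{lem:SANDANDcheckP}. You actually justify the cycle-free reduction more carefully than the paper does---the paper simply asserts that removing cycles ``cannot make it an element of $\agsem{\AND(\dots)}$'' without spelling out the contrapositive cycle-insertion argument you give.

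The genuine divergence is the $\SAND$ case. The paper does not build a product with a monitor; instead it first disposes of the two easy failure modes ($\pathvar(0)\in\valuation(\atinit)\setminus\valuation(\atinit_1)$ and $\pathvar(\size{\pathvar})\in\valuation(\atfinal)\setminus\valuation(\atfinal_n)$) by direct reachability, and then runs an iterated forward analysis: it defines intermediate sets $\intermed{i}=\valuation(\atfinal_i)\cap\valuation(\atinit_{i+1})$, restricts $\system$ to avoid $\intermed{i}$, and inductively computes state sets $G_i$ (states reachable as the greedy $i$-th cut point from a start in $\valuation(\atinit)\cap\valuation(\atinit_1)$) and $B_i$ (states in $\valuation(\atfinal)$ reachable from $G_{i-1}$ while dodging $\intermed{i}$), proving two auxiliary lemmas about them. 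Your product-automaton construction encodes exactly the same greedy progress index, but packages it as a single reachability query in $\system\times\mathcal{M}$; this is conceptually cleaner and avoids the case analysis and the bespoke lemmas. One small point to tighten in your write-up: the acceptance condition should not be ``monitor in $\top$'' but rather ``monitor has reached state $n$ \emph{and} the current system state lies in $\valuation(\atfinal_n)$''---once the monitor is in state $n$ it stays there, yet a path that subsequently ends outside $\valuation(\atfinal_n)$ is still a counter-example to Over-Match. With that adjustment the reachability target in the product is $\{(s,q): s\in\valuation(\atfinal)\text{ and not }(q=n\text{ and }s\in\valuation(\atfinal_n))\}$, and your argument goes through.
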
  

\begin{proof}\ 

\noindent
\textbf{Case for \OR:}\\
Consider the following algorithm which aims at exhibiting a path in
  $\agsem{\ag{\atinit}{\atfinal}}$, going from some state $\state$ to some state
  $\state'$, and such that it is not in 
  $\agsem{\OR(\ag{\atinit_1}{\atfinal_1},\ldots,\ag{\atinit_n}{\atfinal_n})}$.
	
  \begin{algorithmic}[1]
    \ForAll{$\state \in \valuation(\atinit)$}\label{line:loop1}
    \State $R \gets \reach(\state)$ \label{line:post}
    \ForAll{$\state' \in R \inter \valuation(\atfinal)$}\label{line:loop2}
    \State $Witness \gets false$
    \ForAll{$i \in \interval{1}{n}$} \label{line:loop3}
    \State $Witness \gets Witness \ou (\state \in \valuation(\atinit_i) \et \state' \in \valuation(\atfinal_i))$
    \EndFor
    \If{$\non Witness$} \Return{Reject}
    \EndIf
    \EndFor
    \EndFor\\
    \Return{Accept}
  \end{algorithmic}  
	
  This algorithm clearly terminates. Also, it performs a polynomial
  amount of steps in the size of $\system$ and $n$.
	Indeed, line~\ref{line:loop1} is executed at most 
  $\size{\valuation(\atinit)} \leq \size{\StateSet}$ times, the computation of $R$ in line~\ref{line:post} takes polynomial time,
  line~\ref{line:loop2} is executed at most 
  $\size{\valuation(\atfinal)} \leq \size{\StateSet}$ times, and 
	line~\ref{line:loop3} is executed $n$ times. 

  Moreover, one can easily see that the algorithm rejects \textiff
  $Witness$ evaluates to false because of some pair of connected states $\state$
  and $\state'$, entailing the existence of a path in $\agsem{\ag{\atinit}{\atfinal}}$
  but not in any of the $\agsem{\ag{\atinit_i}{\atfinal_i}}$'s.

\smallskip
\noindent
\textbf{Case for \SAND:}\\
  Let
  $\system$ be a transition system, and $\atinit,\atfinal,\atinit_1,\atfinal_1,\ldots,\atinit_n,\atfinal_n\in \AtmSet$.
  We design an algorithm
  based on the verification that
  $\agsem{\ag{\atinit}{\atfinal}} \setminus
  \agsem{\SAND(\ag{\atinit_1}{\atfinal_1},\ldots,\ag{\atinit_n}{\atfinal_n})}
  = \emptyset$.

  Note that, if
  $\pathvar \in \agsem{\ag{\atinit}{\atfinal}} \setminus
  \agsem{\SAND(\ag{\atinit_1}{\atfinal_1},\ldots,\ag{\atinit_n}{\atfinal_n})}$,
  then either 
	\begin{enumerate}
	\item\label{cond:1}
  $\pathvar(0) \in \valuation(\atinit) \setminus \valuation(\atinit_1)$,
  or 
	\item\label{cond:2}
  $\pathvar(\size{\pathvar}) \in \valuation(\atfinal) \setminus
  \valuation(\atfinal_n)$, or 
	\item\label{cond:3}
  $\pathvar(0) \in \valuation(\atinit) \inter \valuation(\atinit_1)$,
  $\pathvar(\size{\pathvar}) \in \valuation(\atfinal)$ but there is an
  $1\leq i\leq n-1$ such that no subsequence of $\pathvar$ visits the
  sets
  $\valuation(\atfinal_1) \inter
  \valuation(\atinit_2),\valuation(\atfinal_2) \inter
  \valuation(\atinit_3),\ldots, \valuation(\atfinal_i) \inter
  \valuation(\atinit_{i+1})$ in this order.
\end{enumerate}

  Notice that the existence of a path that satisfies 
	condition~\ref{cond:1}
  (resp.\ref{cond:2}) is equivalent to $\reach(\valuation(\atinit) \setminus
  \valuation(\atinit_1)) \inter \valuation(\atfinal) \neq \emptyset$
  (resp.\ $\coreach(\valuation(\atfinal) \setminus
  \valuation(\atfinal_n)) \inter \valuation(\atinit) \neq
  \emptyset$). These last two non-emptiness properties can be verified
  in polynomial time in $\size{\system}$, and this is what our algorithm
  first checks.  If one of the two non-emptiness properties holds,
  then the algorithm rejects.  Otherwise the algorithm proceeds by an
  iterated forward search to exhibit a path satisfying 
	condition~\ref{cond:3}, as follows: 
  
  For each $i \in \Setleqn{n-1}$, we denote by $\intermed{i}$
  (``intermediate states for $i$'') the set
  $\valuation(\atfinal_i) \inter \valuation(\atinit_{i+1})$. We now 
  let $(\system_i)_{i\in\Setleqn{n-1}}$ be the transition systems that
  are the restrictions of $\system$ to the set of states outside
  $\intermed{i}$, formally to the set $\StateSet \setminus
  \intermed{i}$.

    We also define the sets of states $G_i$, for $i \in
    \Setleqn[0]{n-1}$, by $G_0 = \valuation(\atinit) \inter
    \valuation(\atinit_1)$, and for $i \in \Setleqn[0]{n-2}$,
    $$G_{i+1}= (G_{i} \union \post(\reach[\system_{i+1}](G_{i}
    \setminus \intermed{i+1}))) \inter \intermed{i+1}.$$

    Intuitively, set $G_{i+1}$ is the set of states in $\system$
    that can be reached by paths visiting a state in $G_i$, thus in
    $Int_i$, that end in $Int_{i+1}$ and such that after their last
    state in $G_i$, theses paths visit 
    $Int_{i+1}$ exactly once. This
    property is formally established by the following lemma.
    
 \begin{lemma}
  \label{lem:good} Let $i \in \Setleqn[0]{n-1}$, and $\state\in\StateSet$.
If $\state \in G_i$, then there exsists a path $\pathvar$
that goes from $G_0$ to $\state$, and, if $i \geq 1$ and $\size{\pathvar} \geq 1$,
$\pathvar\interval{0}{\size{\pathvar}-1}$ does not visit $\intermed{1},\ldots,\intermed{i}$ in this order.
\end{lemma}
\begin{proof} We proceed by induction on $i$. 
The case for $i=0$ is trivial, since $\pathvar=\state$. Let $i=1$ and $\state
\in G_1$. Either $\state \in G_0$, then again $\pathvar=\state$ and we
are done, or $\state \not\in G_0$ and because $G_1=(G_0 \union
\post(\reach[\system_{1}](G_0 \setminus \intermed{1}))) \inter
\intermed{1}$, there exists a path $\pathvar$ of size at least $1$, from
$G_{0}$ to $\state$, such that $\pathvar\interval{0}{\size{\pathvar}-1}$ does
not visit $\intermed{1}$, and we are done.

Let $i>1$ and $\state \in G_i$. If $\state \in G_{i-1}$, then we
conclude by induction hypothesis. Otherwise $\state \not\in G_{i-1}$,
therefore $\state \in \post(\reach[\system_{i}](G_{i-1})$. Thus,
there exists a path $\pathvar''$ going from a state $\state' \in G_{i-1}$ to
$\state$, such that $\size{\pathvar''}\leq 1$, since $\state' \neq
\state$, and $\pathvar''\interval{0}{\size{\pathvar''}-1}$ does not
visit $\intermed{i}$.

Moreover, by induction hypothesis, there exists a path $\pathvar'$ that goes from
$G_0$ to $\state' \in G_{i-1}$, and if $\size{\pathvar'} \geq 1$,
$\pathvar'\interval{0}{\size{\pathvar'}-1}$ does not visit
$\intermed{1},\ldots,\intermed{i-1}$ in this order.  Let $\pathvar = \pathvar' . \pathvar''$,
which goes from $G_0$ to $\state$. We conclude the proof by showing
that $\pathvar\interval{0}{\size{\pathvar}-1}$ does visit
$\intermed{1},\ldots,\intermed{i}$ in this order.

If $\size{\pathvar'} \geq 1$, we know that
$\pathvar'\interval{0}{\size{\pathvar'}-1}$ does not visit
$\intermed{1},\ldots,\intermed{i-1}$ in this order, which implies that
$\pathvar'\interval{0}{\size{\pathvar'}-1}$ does not visit
$\intermed{1},\ldots,\intermed{i}$ in this order. Then, a visit in $\pathvar$ of sets
$\intermed{1},\ldots,\intermed{i}$ in this order would  require to visit
$\intermed{i}$ in $\pathvar''$ which cannot be the case.
  \qed
\end{proof}

Now, for every  $i \in
\Setleqn{n-1}$, we let
$$B_{i} = \reach[\system_i](G_{i-1} \setminus
    \intermed{i}) \inter \valuation(\gamma).$$

    Intuitively, a state $\state \in B_i$ is such that there is path
    from $G_0$ to $\state$, hence ending in $\valuation(\atfinal)$,
    that visits all the $G_j$'s, for $j<i$, in this order.

\begin{lemma}
  \label{lem:bad}
There exists $i \in \Setleqn{n-1}$ such that $B_i\neq\emptyset$ iff 
  there is a path that satisfies condition (3).
\end{lemma}  
\begin{proof}\ 

\noindent
$(\impl)$\\ 
Let $i \in \Setleqn{n-1}$, such that $B_i\neq\emptyset$.  By
  definition of $B_i$, there exists a path $\pathvar''$ going from a state
  $\state' \in G_{i-1}$ to $\state \in \valuation(\atfinal)$ that does not visit
  $\intermed{i}$.  Also, by Lemma~\ref{lem:good}, we have a path $\pathvar'$
  that goes from $G_0$ to $\state' \in G_{i-1}$, such that, if $i \geq 1$
  and $\size{\pathvar'} \geq 1$, $\pathvar'\interval{0}{\size{\pathvar'}-1}$ does
  not visit $\intermed{1},\ldots,\intermed{i-1}$ in this order.  So it is now
  easy to see that any sub-sequence of states of $\pathvar' . \pathvar''$ is
  either a sub-sequence of states of $\pathvar'$, or has its last state
  in $\pathvar''$, so that $\pathvar' . \pathvar''$ satisfies 
	condition~\ref{cond:3}, for $i$.

\noindent
$(\Leftarrow)$\\ 
Let $\pathvar$ be a path that satisfies condition~\ref{cond:3}.  
First, notice that if $\pathvar$ does not visit $\intermed{1}$,
  then $B_1 \neq \emptyset$.  Otherwise, there exists an $i$, such that
  $\pathvar$ visits $\intermed{1},\ldots,\intermed{i}$ in this order.
	Let
  $i'$ be the maximum of such $i$. We show that $B_{i'+1}\neq\emptyset$.

  For $i\in\Setleqn{i'}$, we define $j(i) \in
  \Setleqn[0]{\size{\pathvar}}$ by induction over $i$: $j(1)$ is the
  smallest $j$ such that $\pathvar(j) \in \intermed{1}$, and for
  $i\in\Setleqn[2]{i'}$, $j(i)$ is the smallest $j$ above $j(i-1)$
  such that $\pathvar(j) \in \intermed{i}$. One can verify by
    induction on $i\in\Setleqn{i'}$ that $j(i)$ is well defined and
    that $\pathvar(j(i)) \in G_i$.

  We conclude the proof of the lemma by noticing that by definition of
  $i'$, we have for all $i \geq j(i')$, $\pathvar(i)
  \not\in\intermed{i'+1}$, so that $\lst{\pathvar} \in B_{i'+1}$. Hence
  $B_{i'+1}\neq\emptyset$. \qed
\end{proof}

The algorithm iteratively computes in polynomial time the sets
$G_i$ and $B_i$ and rejects if one of the encountered $B_i$ is empty.
Otherwise the algorithm accepts.

\smallskip
\noindent
\textbf{Case for \AND:}\\
  We exhibit a non-deterministic polynomial time algorithm to decide
  whether or not $\AND(\ag{\atinit_1}{\atfinal_1},\ldots,\ag{\atinit_n}{\atfinal_n})
  \not\supseteq \ag{\atinit}{\atfinal}$.

  The algorithm guesses a path
  $\pathvar \in \agsem{\ag{\atinit}{\atfinal}} \setminus
  \agsem{\AND(\ag{\atinit_1}{\atfinal_1},\ldots,\ag{\atinit_n}{\atfinal_n})}$.
  Such a path can be chosen cycle-free, since removing all its cycles
  still makes the resulting path an element of
  $\agsem{\ag{\atinit}{\atfinal}}$, but cannot make it an element of
  $\agsem{\AND(\ag{\atinit_1}{\atfinal_1},\ldots,\ag{\atinit_n}{\atfinal_n})}$.
  Then the algorithm checks that $\pathvar(0) \in \valuation(\atinit)$
  and $\pathvar(\size{\pathvar}) \in \valuation(\atfinal)$. Finally, the
  algorithm uses the polynomial algorithm of Lemma~\ref{lem:SANDANDcheckP}
  to answer whether 
  $\pathvar \not\in \agsem{\AND(\ag{\atinit_1}{\atfinal_1},\ldots,\ag{\atinit_n}{\atfinal_n})}.$
\qed
\end{proof}

Finally, we can get an upper bound for column 5 of Table~\ref{tab:complexityrecap} 
(the Match property)
by taking the maximum between upper
bound complexities for Under-Match and Over-Match, which achieves the filling of Table~\ref{tab:complexityrecap}.

\section{Conclusion  and future work}
\label{sec:conclusion}
In this work, we have developed and studied a formal setting to assist
experts in the design of attack trees when a particular system is
considered. The system is 
described by a finite transition system 
that reflects its dynamics and whose finite
paths (sequences of states) denote attack scenarios. 
The attack tree nodes are labeled with pairs 
$\ag{\atinit}{\atfinal}$ expressing the attacker's goals in 
terms of pre and postconditions. 
The semantics of attack trees is based on 
sets of finite paths in the transition system. Such sets
of paths can be characterized as a mere reachability condition of the form
``all paths from condition $\atinit$ to condition $\atfinal$'', or by
a combination of those by means of  \OR, \AND, and \SAND operators.

We have exhibited  the Admissibility
property which allows us to check whether it makes sense to analyze  
a given attack tree in the context of a considered system. We then propose 
four natural correctness properties on top of Admissibility, namely
\begin{itemize} 
\item Meet -- the node's refinement 
makes sense in a given system;
\item  Under (resp. Over) Match -- 
 the node's refinement under-approximates 
(resp. over-approximates) the goal of the node in a given system; and 
\item Match -- the node's 
refinement expresses exactly the node's goal in a given system.  
\end{itemize}

While analyzing an attack tree with respect to a system, 
we propose to start by checking whether each of its subtrees  
satisfies the Meet property 
-- this is the minimum that we require from a correct attack tree. If this is 
the case, we can then check how well  the tree refines the main attacker's 
goal, using  (Under- and  Over-) Matching. 
Our study reveals that the highest complexity in such  
analysis is due to conjunctive
refinements (\ie the \AND operator), as opposed to disjunctive and 
sequential refinements, cf. Table~\ref{tab:complexityrecap}. 
The reason is that the semantics that we use in our framework relies on paths in a transition 
system and thus modeling and verification for paths' concatenation  
(used to formalize the \SAND refinements) is much simpler than 
those for parallel 
decomposition (used to formalize the \AND refinements).
Indeed,  the latter requires to analyze the combinatorics of paths 
representing children of a conjunctively refined node.

The framework presented in this paper offers numerous possibilities for 
practical applications in industrial setting.
First, it can be used to estimate the quality of a   
refinement of an attack goal, that an expert could borrow from an attack 
pattern library. The correctness properties introduced in this work 
allow us to evaluate the relevance of often generic refinements in 
the context of a given system.
Second, classical attack trees use 
text-based nodes that represent a desired configuration to be reached
(our postcondition $\atfinal$) without specifying  
the initial configuration (our precondition $\atinit$) 
where the attack will start from. Given a 
transition system $\system$ describing a real system to be analyzed, 
the text-based goals can be straightforwardly translated into 
formal propositions expressing the final configurations (\ie $\atfinal$) to 
be reached by the attacker. The expert may also specify the initial 
configurations (\ie $\atinit$), but if he does not do so, they can be 
automatically generated from the transition system, by simply taking 
all states belonging to the set $\coreach(\valuation(\atfinal))$ of 
predecessors of $\valuation(\atfinal)$ in $\system$.

For pedagogical reasons, we have focused on simple atomic goals 
(\ie node labels) that are definable in terms of a precondition
and a postcondition. As one of the future directions, we
would like to enrich the language of atomic goals, for
instance by adding variables with history or invariants.  
Variables with history can be used to express properties such as
\emph{"Once detected, the attacker will always stay detected"}. 
With invariants, we may add constraints to the goals, as in
\emph{"Reach Room2 undetected without ever crossing
  Room1"}.  If invariants are added to atomic goals, for instance
using LTL formulas, the complexity of some problems presented in this
paper may increase. In that case, checking that a path 
satisfies  the semantics of a node might no longer be done in
constant time, 
  	but in
  polynomial time, or even in PSPACE-complete, if arbitrary LTL formulas are
  allowed \cite{GiacomoVardi2013ltlfinitetraces}.    It would then be relevant
  to study  the interplay between the expressiveness\linebreak of the atomic goals
  and the complexity of verifying these correctness properties.

It would also be interesting to extend our framework to capture more
complex properties than those defined in
Definition~\ref{def:correctness}.  Pragmatic examples of such
properties would be validities and tests expressed in an adequate
logic.
\emph{Validities} would be
formulas that are true in any system. An example of a validity would look like 
$\AND{\ag{\atinit}{\atfinal}}{\ag{\atinit'}{\atfinal'}}
 \Vsupseteq \SAND{\ag{\atinit}{\atfinal}}{\ag{\atinit'}{\atfinal'}}$, with the meaning that a sequential composition is a particular case of 
parallel composition.  
\emph{Tests} would be formulas which might be true in some systems, 
but not necessarily in all cases. For instance, a formula like
$\AND{\ag{\atinit}{\atfinal}}{\ag{
\atinit'}{\atfinal'}} \Vsubseteq \SAND{\ag{\atinit}{\atfinal}}{\ag{\atinit'}{
\atfinal'}}$ would mean that, in a given system, 
it is impossible to realize both $\ag{\atinit}{
\atfinal}$ and $\ag{\atinit'}{\atfinal'}$ otherwise than sequentially in this 
particular order. 

Finally, we are currently working 
on integrating the framework developed in this work 
to the ATSyRA tool. The ultimate goal is to design software for generation of 
attack trees satisfying the correctness properties that we have introduced. The short-
term objective is to validate the practicality of the proposed framework 
and its usability with respect to the complexity results that we have proven 
in this work.

\bibliographystyle{splncs} 
\bibliography{ms}

\begin{thebibliography}{10}
\providecommand{\url}[1]{\texttt{#1}}
\providecommand{\urlprefix}{URL }

\bibitem{post15}
Aslanyan, Z., Nielson, F.: Pareto efficient solutions of attack-defence trees.
  In: {POST}. LNCS, vol. 9036, pp. 95--114. Springer (2015)

\bibitem{aslanyan2017exactcosts}
Aslanyan, Z., Nielson, F.: Model checking exact cost for attack scenarios. In:
  International Conference on Principles of Security and Trust. Springer (2017)

\bibitem{audinot2016soundnessatree}
Audinot, M., Pinchinat, S.: {On the Soundness of Attack Trees}. In: Graphical
  Models for Security. {LNCS}, vol. 9987, pp. 25--38. Springer (2016)

\bibitem{esorics17}
Audinot, M., Pinchinat, S., Kordy, B.: Is my attack tree correct? In:
  {ESORICS}. LNCS, Springer (2017), (to appear)

\bibitem{clarke1981ctl}
Clarke, E.M., Emerson, E.A.: Design and synthesis of synchronization skeletons
  using branching time temporal logic. In: Workshop on Logic of Programs. pp.
  52--71. Springer (1981)

\bibitem{cook1971complexity}
Cook, S.A.: The complexity of theorem-proving procedures. In: Proceedings of
  the third annual ACM symposium on Theory of computing. pp. 151--158. ACM
  (1971)

\bibitem{GiacomoVardi2013ltlfinitetraces}
De~Giacomo, G., Vardi, M.Y.: Linear temporal logic and linear dynamic logic on
  finite traces. In: IJCAI'13 Proceedings of the Twenty-Third international
  joint conference on Artificial Intelligence. pp. 854--860. Association for
  Computing Machinery (2013)

\bibitem{Gadyatskaya2016modelcheckingquantitative}
Gadyatskaya, O., Hansen, R.R., Larsen, K.G., Legay, A., Olesen, M.C., Poulsen,
  D.B.: Modelling attack--defense trees using timed automata. In: {FORMATS}.
  LNCS, vol. 9884, pp. 35--50. Springer (2016)

\bibitem{garey2002computers}
Garey, M.R., Johnson, D.S.: Computers and intractability, vol.~29. W. H.
  Freeman and Company (2002)

\bibitem{FundInf17}
Horne, R., Mauw, S., Tiu, A.: Semantics for specialising attack trees based on
  linear logic. Fundam. Inform.  153(1-2),  57--86 (2017)

\bibitem{trespass_generation}
Ivanova, M.G., Probst, C.W., Hansen, R.R., Kamm{\"{u}}ller, F.: {Transforming
  Graphical System Models to Graphical Attack Models}. In: Graphical Models for
  Security. {LNCS}, vol. 9390, pp. 82--96. Springer (2015)

\bibitem{jhawar2015seqatree}
Jhawar, R., Kordy, B., Mauw, S., Radomirovi{\'c}, S., Trujillo{-}Rasua, R.:
  {Attack Trees with Sequential Conjunction}. In: {SEC}. {IFIP AICT}, vol. 455,
  pp. 339--353. Springer (2015)

\bibitem{JurgensonW09}
J{\"{u}}rgenson, A., Willemson, J.: {Serial Model for Attack Tree
  Computations}. In: {ICISC}. LNCS, vol. 5984, pp. 118--128. Springer (2009)

\bibitem{kordy2012adtrees}
Kordy, B., Mauw, S., Radomirovic, S., Schweitzer, P.: Attack--defense trees. J.
  Log. Comput.  24(1),  55--87 (2014)

\bibitem{survey}
Kordy, B., Pi{\`{e}}tre{-}Cambac{\'{e}}d{\`{e}}s, L., Schweitzer, P.: Dag-based
  attack and defense modeling: Don't miss the forest for the attack trees.
  Computer Science Review  13-14,  1--38 (2014)

\bibitem{bayesian}
Kordy, B., Pouly, M., Schweitzer, P.: Probabilistic reasoning with graphical
  security models. Inf. Sci.  342,  111--131 (2016)

\bibitem{Kumar2015uppaal}
Kumar, R., Ruijters, E., Stoelinga, M.: Quantitative attack tree analysis via
  priced timed automata. In: {FORMATS}. LNCS, vol. 9268, pp. 156--171. Springer
  (2015)

\bibitem{leyton2014understanding}
Leyton-Brown, K., Hoos, H.H., Hutter, F., Xu, L.: Understanding the empirical
  hardness of {NP}-complete problems. Communications of the ACM  57(5),
  98--107 (2014)

\bibitem{mauw2005foundationsatrees}
Mauw, S., Oostdijk, M.: {Foundations of Attack Trees}. In: {ICISC}. LNCS, vol.
  3935, pp. 186--198. Springer (2005)

\bibitem{OWASP}
{OWASP}: {CISO AppSec Guide: Criteria} for managing application security risks
  (2013)

\bibitem{PhillipsS98}
Phillips, C.A., Swiler, L.P.: A graph-based system for network-vulnerability
  analysis. In: {Workshop on New Security Paradigms}. pp. 71--79. {ACM} (1998)

\bibitem{Wolter}
Pieters, W., Padget, J., Dechesne, F., Dignum, V., Aldewereld, H.:
  Effectiveness of qualitative and quantitative security obligations. J. Inf.
  Sec. Appl.  22,  3--16 (2015)

\bibitem{pinchinat2015atsyra}
Pinchinat, S., Acher, M., Vojtisek, D.: {ATSyRa: An Integrated Environment for
  Synthesizing Attack Trees -- (Tool Paper)}. In: Graphical Models for
  Security. LNCS, vol. 9390, pp. 97--101. Springer (2015)

\bibitem{OTAN}
Research, N., (RTO), T.O.: {Improving Common Security Risk Analysis}. Tech.
  Rep. AC/323(ISP-049)TP/193, North Atlantic Treaty Organisation, University of
  California, Berkeley (2008)

\bibitem{act}
Roy, A., Kim, D.S., Trivedi, K.S.: Attack countermeasure trees {(ACT):} towards
  unifying the constructs of attack and defense trees. Security and
  Communication Networks  5(8),  929--943 (2012)

\bibitem{Schn}
Schneier, B.: {Attack Trees: Modeling Security Threats}. Dr. Dobb's Journal of
  Software Tools  24(12),  21--29 (1999)

\bibitem{schnoebelen2002complexity}
Schnoebelen, P.: The complexity of temporal logic model checking. Advances in
  modal logic  4(393-436), ~35 (2002)

\bibitem{SheynerHJLW02}
Sheyner, O., Haines, J.W., Jha, S., Lippmann, R., Wing, J.M.: {Automated
  Generation and Analysis of Attack Graphs}. In: {IEEE} S\&P. pp. 273--284.
  {IEEE} Computer Society (2002)

\bibitem{gal}
Thierry{-}Mieg, Y.: Symbolic model-checking using its-tools. In: {TACAS}. LNCS,
  vol. 9035, pp. 231--237. Springer (2015)

\bibitem{vigo_generation}
Vigo, R., Nielson, F., Nielson, H.R.: {Automated Generation of Attack Trees}.
  In: {CSF}. pp. 337--350. {IEEE} Computer Society (2014)

\end{thebibliography}

\end{document}